\newcommand{\diag}{\operatorname{diag}}
\theoremstyle{plain}
\newtheorem{theorem}{Theorem}
\newtheorem{lemma}{Lemma}
\theoremstyle{definition}
\newtheorem{assumption}{Assumption}
\theoremstyle{remark}
\newtheorem{remark}[theorem]{Remark}
\journal{Spatial Statistics}
\begin{document}

\begin{frontmatter}

\title{Sparse\textendash Smooth Spatially Varying Coefficient Quantile Regression}

\author{HOU Jian}
\affiliation{organization={Center for Applied Statistics, School of Statistics},
            addressline={Renmin University of China}, 
            city={Beijing},
            postcode={100872}, 
            state={Beijing},
            country={China}}

\author{MENG Tan}
\affiliation{organization={Center for Applied Statistics, School of Statistics},
            addressline={Renmin University of China}, 
            city={Beijing},
            postcode={100872}, 
            state={Beijing},
            country={China}}

\author{TIAN Maozai}
\affiliation{organization={Center for Applied Statistics, School of Statistics},
            addressline={Renmin University of China}, 
            city={Beijing},
            postcode={100872}, 
            state={Beijing},
            country={China}}

\begin{abstract}
    Modeling spatially varying relationships in the presence of heavy-tailed responses and irregular sampling designs presents a significant challenge in geospatial analysis. In this paper, we propose a Sparse--Smooth Spatially Varying Coefficient Quantile Regression (SS--SVCQR) framework that simultaneously achieves robust estimation, structural interpretability, and computational scalability. Our approach decomposes each coefficient function into a global baseline and a spatial deviation component. By applying a group $L_2$ penalty, the model automatically selects between globally constant and spatially varying specifications, offering a parsimonious alternative to standard Geographically Weighted Regression (GWR). Conditional on local variation, spatial smoothness is enforced via a normalized graph Laplacian regularization, which adapts to varying sampling densities and admits a Bayesian interpretation as an intrinsic Gaussian Markov Random Field (GMRF) prior. We develop efficient convex optimization algorithms based on ADMM and smoothed proximal-gradient methods to handle the non-differentiable check loss. Theoretically, we establish the root-$n$ consistency and asymptotic normality of the global estimators, as well as the selection consistency of the global-vs-local structure under mild conditions. Extensive simulations and a real-world housing application show that SS--SVCQR provides more accurate quantile predictions and more interpretable global–local structures than existing kernel- and spline-based alternatives, especially in the presence of heavy-tailed noise.
\end{abstract}

\begin{graphicalabstract}
    \includegraphics{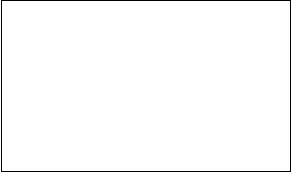}
\end{graphicalabstract}

\begin{highlights}
    \item Parsimonious global--local decomposition distinguishes constant effects from spatial deviations via adaptive group selection.
    \item Normalized Graph-Laplacian smoothing adapts to irregular sampling densities while ensuring geometric fidelity.
    \item Convex formulation solved by efficient ADMM and Moreau-smoothed proximal-gradient algorithms.
    \item Establishes oracle properties: selection consistency for local deviations and root-$n$ normality for global coefficients.
    \item Outperforms kernel-based methods in recovering heterogeneous patterns under heavy-tailed distributions.
\end{highlights}

\begin{keyword}
    Quantile Regression \sep Spatially Varying Coefficients \sep Graph Laplacian \sep Variable Selection \sep Group Lasso \sep ADMM.
\end{keyword}

\end{frontmatter}

%% \linenumbers  % enable for review

\section{Introduction}\label{sec:introduction}

Spatial nonstationarity—the phenomenon where relationships between a response variable and its predictors vary across geographical locations—poses a fundamental challenge in regression analysis. Ignoring this heterogeneity by fitting a global constant-coefficient model can lead to biased estimates and misleading inference. To address this, the literature on Spatially Varying Coefficient (SVC) models has flourished. The most prominent approach is Geographically Weighted Regression (GWR) \cite{Brunsdon1996, Fotheringham2002}, which fits local weighted regressions at each target location. While effective, standard GWR often suffers from multicollinearity and bandwidth selection sensitivity. Recent extensions have sought to mitigate these issues through multiscale formulations \cite{FotheringhamYangKang2017, OshanFotheringham2019} and mixed GWR (MGWR) that allows some coefficients to remain fixed \cite{Mei2006, Geniaux2017}. Alternative SVC frameworks include Bayesian hierarchical models based on Gaussian Processes (GPs) \cite{Gelfand2003SVC, Banerjee2008} and frequentist approaches using penalized splines \cite{Eilers2010, Xiao2013}. However, GP-based methods often face cubic computational complexity, limiting their applicability to large-scale geospatial datasets.

Parallel to mean regression, Quantile Regression (QR) \cite{KoenkerBassett1978} has become an indispensable tool for characterizing the entire conditional distribution. In spatial contexts, modeling conditional quantiles is particularly valuable because spatial heterogeneity often manifests asymmetrically—affecting the tails of the distribution differently than the center \cite{Reich2012STQR}. Although Spatially Varying Coefficient Quantile Regression (SVC-QR) offers a powerful lens for such analysis, existing methods face dual challenges: interpretability and scalability. Many nonparametric SVC-QR estimators \cite{Hallin2004, Lum2012} yield coefficients that vary everywhere, failing to distinguish between truly local effects and those that are effectively global. This lack of parsimony complicates interpretation. While variable selection techniques like the Lasso have been adapted for spatial models \cite{Wang2008, Cai2013}, scalable convex formulations that simultaneously deliver interpretable global-vs-local selection and graph-based smoothness remain relatively limited. Recent developments in "Network Lasso" and graph-trend filtering \cite{Hallac2015, Wang2016} suggest that graph-based penalties offer a computationally efficient alternative to kernel or spline methods, yet their integration with quantile loss in a spatial setting, together with an explicit mechanism for global–vs–local selection, remains underexplored.

In this paper, we propose a penalized estimator that addresses these gaps by integrating graph Laplacian regularization with group-sparse penalization. Our approach, termed SS-SVCQR (Sparse-Smooth Spatially Varying Coefficient Quantile Regression), is designed with two key desiderata. First, we employ a decomposition strategy that separates each coefficient into a global baseline and a spatial deviation. A group $L_2$ penalty \cite{YuanLin2006} is applied to the deviation vector, shrinking it to zero when the covariate effect is globally constant. This provides an automated, interpretable decision mechanism for global-vs-local specification, a feature often absent in kernel-weighted local QR. Second, conditioned on being local, we regularize the deviations using a normalized graph Laplacian constructed from the sampling locations \cite{Smola2003, Belkin2003}. This encourages spatial continuity and allows the model to adapt to irregular sampling designs without the heavy computational burden of dense covariance matrices used in GPs. The resulting formulation is convex and compatible with large-scale solvers, including Alternating Direction Method of Multipliers (ADMM) schemes tailored to the non-differentiable check loss.

On the theoretical front, we establish that our estimator enjoys desirable asymptotic properties. We show that the global coefficients are root-$n$ consistent and asymptotically normal, while the deviation fields converge in mean square at a rate that balances Laplacian bias and stochastic variability. Furthermore, we prove that the selection of globally constant effects is consistent under a groupwise beta-min condition. Our theoretical framework accommodates random sampling locations filling a compact set and sparse graphs with uniformly bounded average degrees.

The remainder of this paper is organized as follows. Section \ref{sec:model} introduces the proposed model decomposition and the graph Laplacian regularization framework. Section \ref{sec:estimation} details the optimization algorithms, including an efficient ADMM implementation and tuning parameter selection via spatially blocked cross-validation. Section \ref{sec:theory} provides the asymptotic analysis. In Section \ref{sec:simulation}, we evaluate the finite-sample performance of our method against competing GWR and spline-based approaches through extensive simulations. Section \ref{sec:realdata} applies our method to a real-world analysis of housing prices in the Lucas County, demonstrating the model's ability to uncover heterogeneous pricing mechanisms and distributional tail dynamics. Section \ref{sec:discussion} concludes with a discussion.

\section{Model}\label{sec:model}

We develop a sparse--smooth spatially varying coefficient quantile regression (SS--SVCQR) at a fixed quantile $\tau\in(0, 1)$. This framework fuses the robustness of quantile regression with the flexibility of graph-based regularization to capture heterogeneous spatial dependencies.

Scalars are denoted by italic lower case, vectors by bold lower case, and matrices by bold upper case. For a vector $\bm v$, $\|\bm v\|_2$ denotes the Euclidean norm, and $\bm 1$ is an all-ones vector of conformable length.

\subsection{Data and Decomposition Structure}

We observe a spatial dataset $\{(Y_i,\bm Z_i,\bm X_i,u_i): i=1,\dots,n\}$, where $Y_i\in\mathbb R$ is the response, $\bm Z_i\in\mathbb R^{q}$ represents covariates with purely global effects, $\bm X_i\in\mathbb R^{p}$ contains predictors with potentially spatially varying effects, and $u_i\in\mathcal U\subset\mathbb R^2$ denotes the spatial location. Let $Q_\tau(Y_i\mid \bm Z_i,\bm X_i,u_i)$ be the conditional $\tau$-quantile. We assume the data generating process follows a varying-coefficient structure:
\begin{equation}\label{eq:model}
    Q_\tau\{Y_i\mid \bm Z_i, \bm X_i, u_i\}
    = \bm Z_i^\top\bm\alpha + \sum_{j=1}^{p} X_{ij} \beta_j(u_i),
    \quad
    i = 1, \dots, n,
\end{equation}
where $\bm\alpha\in\mathbb R^{q}$ collects the fixed global coefficients, and $\beta_j(\cdot)$ are unknown functions characterizing the spatial heterogeneity. To simultaneously perform variable selection between global and local specifications, we decompose each varying coefficient into a global baseline level and a location-specific spatial deviation:
\begin{equation}\label{eq:decomp}
    \beta_j(u) = \beta_{G,j} + \delta_j(u),
    \quad j=1, \dots, p.
\end{equation}
Here, $\bm\beta_G=(\beta_{G,1},\dots,\beta_{G,p})^\top\in\mathbb R^{p}$ captures the domain-wide average effect, while $\delta_j:\mathcal U\to\mathbb R$ captures local departures. This additive form allows the model to adaptively shrink $\delta_j(\cdot)$ to zero when the effect of predictor $j$ is homogeneous, reducing the model to a standard global quantile regression.

The model is anchored at level $\tau$ by the quantile exogeneity condition:
\begin{equation}\label{eq:qexog}
    \Pr \big\{Y_i \le \bm Z_i^\top\bm\alpha_0 + \sum_{j=1}^{p} X_{ij} \beta_{0,j}(u_i) \big| \bm Z_i,\bm X_i,u_i\big\} = \tau, \
    \text{almost surely for each } i.
\end{equation}
Here $\bm\beta_0 = \{\beta_{0, 1}(\cdot), \dots, \beta_{0, p}(\cdot)\}$ is represented by $(\bm\beta_{G0}, \{\delta_{0, j}(\cdot)\}_{j=1}^p)$ with $\beta_{0, j}(u)=\beta_{G0, j}+\delta_{0, j}(u)$.

\subsection{Graph Construction and Spatial Regularization}

Since the locations $\{u_i\}$ may be irregularly distributed, we encode the spatial domain's geometry via a proximity graph. Let $\mathcal{G}=(\mathcal{V}, \mathcal{E})$ be a graph with vertices $\mathcal{V}=\{1,\dots,n\}$ representing the observations. We construct a sparse, symmetric adjacency matrix $\bm A\in\mathbb R^{n\times n}$ using a mutual $k$-nearest-neighbor ($k$-NN) rule combined with a Gaussian kernel. Specifically, for two locations $u_i$ and $u_\ell$, the edge weight is defined as:
\begin{equation}
    A_{i\ell} =
    \begin{cases}
        \exp\left(-\frac{\|u_i - u_\ell\|_2^2}{\sigma^2}\right), & \text{if } u_\ell \in \mathcal{N}_k(u_i) \text{ or } u_i \in \mathcal{N}_k(u_\ell), \\
        0, & \text{otherwise},
    \end{cases}
\end{equation}
where $\mathcal{N}_k(u)$ denotes the set of $k$ nearest neighbors of $u$, and $\sigma$ is a bandwidth parameter. This construction ensures the graph is connected while adapting to local sampling densities.

To enforce spatial smoothness on $\bm\delta_j$, we utilize the spectral properties of the graph. Let $\bm D=\diag(\bm A\bm 1)$ be the degree matrix. We adopt the symmetric normalized Laplacian:
\begin{equation}\label{eq:Lsym}
    \bm L_{\mathrm{sym}} = \bm I - \bm D^{-1/2}\bm A \bm D^{-1/2}.
\end{equation}
We favor $\bm L_{\mathrm{sym}}$ over the unnormalized Laplacian ($\bm D - \bm A$) because its spectrum is bounded in $[0,2]$ and, crucially, it normalizes for node degree. In spatial statistics, sampling locations often exhibit clustering; the normalized Laplacian ensures that the regularization penalty is not artificially inflated in densely sampled regions \cite{Chung1997}.

The spatial roughness of the deviation vector $\bm\delta_j = (\delta_j(u_1), \dots, \delta_j(u_n))^\top$ is quantified by the quadratic form:
\begin{equation}\label{eq:rough}
    \bm\delta_j^\top \bm L_{\mathrm{sym}} \bm\delta_j
    = \frac12 \sum_{(i,\ell)\in\mathcal E} A_{i\ell}\left\{\frac{\delta_j(u_i)}{\sqrt{d_i}} - \frac{\delta_j(u_\ell)}{\sqrt{d_\ell}}\right\}^2,
    \quad d_i=(\bm D)_{ii}.
\end{equation}
This penalty term admits a probabilistic interpretation. It corresponds to placing an intrinsic Gaussian Markov Random Field (GMRF) prior on the spatial deviations, $p(\bm\delta_j) \propto \exp(-\frac{\lambda_2}{2} \bm\delta_j^\top \bm L_{\mathrm{sym}} \bm\delta_j)$. This prior encodes the belief that, conditional on its neighbors, the deviation at a node is normally distributed around the degree-weighted average of its neighbors, thereby enforcing local coherence.

\subsection{Identifiability and Matrix Formulation}

The decomposition in \eqref{eq:decomp} requires a constraint to prevent collinearity between the global intercept $\beta_{G,j}$ and the spatial field $\delta_j(\cdot)$. We impose that local deviations are degree-weight centered componentwise. For a connected component $\mathcal C\subset\{1,\dots,n\}$ with an indicator vector $\bm 1_{\mathcal C}$, the identifiability constraint is:
\begin{equation}\label{eq:ident}
    \bm 1_{\mathcal C}^\top \bm D  \bm\delta_j = 0,
\end{equation}
for all $j=1,\dots,p$. Since the null space of $\bm L_{\mathrm{sym}}$ on a component is spanned by $\bm D^{1/2}\bm 1_{\mathcal C}$, ensuring $\bm D^{1/2}\bm\delta_j$ is orthogonal to this null space is equivalent to \eqref{eq:ident}. Under this constraint, $\beta_{G,j}$ interprets as the \emph{degree-weighted spatial average} of the total coefficient $\beta_j(\cdot)$, while $\delta_j(\cdot)$ captures pure spatial variation.

For estimation, we vectorize the system. Stack the responses $\bm y=(Y_1, \dots, Y_n)^\top$ and design matrices $\bm Z=[\bm Z_1^\top; \cdots; \bm Z_n^\top]\in\mathbb R^{n\times q}$ and $\bm X=[\bm X_1^\top; \dots; \bm X_n^\top]\in\mathbb R^{n\times p}$. To handle the varying coefficients, we define diagonal operators $\bm X_{\odot j}=\diag(X_{1j},\dots,X_{nj})\in\mathbb R^{n\times n}$ for $j=1, \dots, p$. Concatenating the deviations as $\bm\Delta=\big[\bm\delta_1^\top, \dots, \bm\delta_p^\top\big]^\top \in\mathbb R^{np}$, we construct the block operator:
\begin{equation}\label{eq:Xdelta}
    \bm X_\delta = \big[\bm X_{\odot 1}\ \bm X_{\odot 2}\ \cdots\ \bm X_{\odot p}\big]\in\mathbb R^{n\times np}.
\end{equation}
so that $X_\delta \Delta$ stacks all local deviation terms in a linear operator form. The global vector of conditional quantiles is then succinctly expressed as:
\begin{equation}\label{eq:vector}
    \bm q_\tau = \bm Z\bm\alpha + \bm X\bm\beta_G + \bm X_\delta \bm\Delta.
\end{equation}
This linear formulation allows us to leverage efficient convex optimization algorithms to estimate the parameters $\theta=(\bm\alpha,\bm\beta_G,\bm\Delta)$ by minimizing the check loss subject to the penalties defined above.

\section{Estimation and Algorithms}\label{sec:estimation}

We estimate $(\bm\theta_P, \bm\Theta_L)$ with $\bm\theta_P=(\bm\alpha^\top,\bm\beta_G^\top)^\top\in\mathbb R^{q+p}$ and $\bm\Theta_L=\{\bm\delta_j\in\mathbb R^n: j=1,\dots,p\}$, each $\bm\delta_j$ degree-weight centered on every component.

The empirical objective at $\tau$ is
\begin{equation}\label{eq:obj-main}
    \begin{aligned}
        \min_{\bm\theta_P,\bm\Theta_L}  
        \sum_{i=1}^n \rho_\tau \Big(Y_i - \bm Z_i^\top\bm\alpha - \bm X_i^\top\bm\beta_G - \sum_{j=1}^p X_{ij} \delta_j(u_i)\Big)
        +   \lambda_{1} \sum_{j=1}^p w_{j} \|\bm\delta_j\|_2 \\
        +   \lambda_{2} \sum_{j=1}^p \bm\delta_j^\top \bm L_{\mathrm{sym}} \bm\delta_j,
    \end{aligned}
\end{equation}
subject to $\bm 1_{\mathcal C}^\top \bm D \bm\delta_j=0$ for every component $\mathcal C$ and all $j$. Weights $w_j>0$ enable adaptive grouping, e.g.\ $w_j=(\|\tilde{\bm\delta}_j\|_2 + a)^{-\gamma}$ with $a>0$ and $\gamma\in(0,1]$\cite{Zou2006, WangLeng2008}. Tuning $(\lambda_{1},\lambda_{2})$ is selected by spatially blocked cross-validation.

Define $\bm r(\bm\theta_P,\bm\Theta_L)=\bm y-\bm Z\bm\alpha-\bm X\bm\beta_G-\bm X_\delta \bm\Delta$ with $\bm\Delta=[\bm\delta_1^\top,\dots,\bm\delta_p^\top]^\top$. The objective \eqref{eq:obj-main} is convex and separable across groups for the non-smooth term.

\paragraph{Handling the centering constraint.} After each update of $\bm\delta_j$, project it to the degree-weight centered subspace. Let $\{\mathcal C_k\}_{k=1}^K$ be the components and $\bm 1_{\mathcal C_k}$ their indicators. The projector reads
\begin{equation}\label{eq:proj}
    \mathsf{Proj}_D(\bm v) = 
    \bm v - \sum_{k=1}^K \frac{\bm 1_{\mathcal C_k}^\top \bm D \bm v}{\bm 1_{\mathcal C_k}^\top \bm D \bm 1_{\mathcal C_k}} \bm 1_{\mathcal C_k},
\end{equation}
and we enforce $\bm\delta_j\leftarrow \mathsf{Proj}_D(\bm\delta_j)$ every iteration.

\subsection*{An ADMM scheme with closed-form proximal maps}

Introduce $\bm s=\bm r(\bm\theta_P,\bm\Theta_L)$ and $\bm z_j=\bm\delta_j$ for $j=1,\dots,p$, and solve
\begin{equation}\label{eq:admm-form}
    \min_{\bm\theta_P,\bm\Theta_L, \bm s, \{\bm z_j\}}
    \sum_{i=1}^n \rho_\tau(s_i)
    + \lambda_{1}\sum_{j=1}^p w_j \|\bm z_j\|_2
    + \lambda_{2}\sum_{j=1}^p \bm\delta_j^\top \bm L_{\mathrm{sym}} \bm\delta_j
\end{equation}
subject to $\bm s=\bm y-\bm Z\bm\alpha-\bm X\bm\beta_G-\sum_{j=1}^p \bm X_{\odot j}\bm\delta_j$ and $\bm z_j=\bm\delta_j$, together with centering via \eqref{eq:proj}. Using scaled duals $(\bm u,\{\bm v_j\})$ and penalties $(\rho_s,\rho_z)>0$, an ADMM sweep is as follows\cite{Boyd2011}:

\emph{Update of $(\bm\alpha,\bm\beta_G)$.} Solve
\begin{equation*}
    \begin{bmatrix}
        \bm Z^\top\bm Z & \bm Z^\top\bm X \\
        \bm X^\top\bm Z & \bm X^\top\bm X \\
    \end{bmatrix}
    \begin{bmatrix}
        \bm\alpha  \\ 
        \bm\beta_G \\
    \end{bmatrix} 
    =
    \begin{bmatrix}
        \bm Z^\top(\bm y - \sum_j \bm X_{\odot j}\bm\delta_j - \bm s + \bm u) \\
        \bm X^\top(\bm y - \sum_j \bm X_{\odot j}\bm\delta_j - \bm s + \bm u) \\
    \end{bmatrix}.
\end{equation*}
The coefficient matrix depends only on $[\bm Z \ \bm X]$ and can be factorized once.

\emph{Update of $\bm s$.} The proximal map of the check loss is an asymmetric soft-threshold. For each $i$,
\begin{equation}\label{eq:prox-check}
    \bm s \leftarrow \mathrm{prox}_{\rho_s^{-1}\rho_\tau} (\bm y - \bm Z\bm\alpha-\bm X\bm\beta_G-\sum_j \bm X_{\odot j}\bm\delta_j + \bm u),
\end{equation}
with
\begin{equation*}
    \mathrm{prox}_{\gamma \rho_\tau}(v)=
    \begin{cases}
        v - \gamma \tau,      & v > \gamma \tau, \\
        0,                    & -\gamma(1 - \tau) \le v \le \gamma \tau, \\
        v + \gamma(1 - \tau), & v < -\gamma(1 - \tau). \\
    \end{cases}
\end{equation*}

\emph{Update of $\bm\delta_j$.} Holding other blocks fixed, $\bm\delta_j$ solves
\begin{equation}\label{eq:delta-lin}
    \begin{aligned}
        \big(2\lambda_{2}\bm L_{\mathrm{sym}}+\rho_s \bm X_{\odot j}^\top \bm X_{\odot j} + \rho_z \bm I\big) \bm\delta_j
        = 
        \rho_s \bm X_{\odot j}^\top\big(\bm y - \bm Z\bm\alpha - \bm X\bm\beta_G - \sum_{\ell\ne j} \bm X_{\odot \ell}\bm\delta_\ell - \bm s + \bm u\big)
        \\
        + \rho_z (\bm z_j-\bm v_j).
    \end{aligned}
\end{equation}
Solve by sparse Cholesky or conjugate gradients; then project $\bm\delta_j\leftarrow \mathsf{Proj}_D(\bm\delta_j)$.

\emph{Update of $\bm z_j$.} Apply group soft-thresholding,
\begin{equation}\label{eq:group-shrink}
    \bm z_j \leftarrow \Big(1-\frac{\lambda_1 w_j}{\rho_z \|\bm\delta_j+\bm v_j\|_2}\Big)_+ (\bm\delta_j+\bm v_j), \quad j=1, \dots, p.
\end{equation}

\emph{Dual updates and stopping.} Update
\begin{equation*}
    \begin{aligned}
        \bm u &\leftarrow \bm u + \big(\bm y-\bm Z\bm\alpha-\bm X\bm\beta_G-\sum_j \bm X_{\odot j}\bm\delta_j-\bm s\big), \\
        \bm v_j &\leftarrow \bm v_j+(\bm\delta_j-\bm z_j),
    \end{aligned}
\end{equation*}
and stop when primal residuals
\begin{equation*}
    \bm r_s = \bm y - \bm Z\bm\alpha - \bm X\bm\beta_G - \sum_j \bm X_{\odot j}\bm\delta_j - \bm s,
    \quad
    \bm r_{z, j} = \bm\delta_j - \bm z_j
\end{equation*}
and dual residuals
\begin{equation*}
    \bm d_s = \rho_s(\bm s - \bm s^{\mathrm{prev}}), \quad
    \bm d_{z,j} = \rho_z(\bm z_j - \bm z_j^{\mathrm{prev}})
\end{equation*}
satisfy $\|\bm r_s\|_2\le \varepsilon_{\mathrm{pri}}$, $\big(\sum_j \|\bm r_{z,j}\|_2^2\big)^{1/2}\le \varepsilon_{\mathrm{pri}}$, $\|\bm d_s\|_2\le \varepsilon_{\mathrm{dual}}$ and $\big(\sum_j \|\bm d_{z,j}\|_2^2\big)^{1/2}\le \varepsilon_{\mathrm{dual}}$, where tolerances scale with $\sqrt n$ and block magnitudes. Over-relaxation and adaptive penalty updates are often beneficial.

\subsection*{A smoothed proximal–gradient scheme with Moreau smoothing}

Define the smoothed loss $M_h(r)=\min_s\{\rho_\tau(s)+\tfrac{1}{2h}(s-r)^2\}$. Its gradient exists and is $1/h$-Lipschitz:
\[
\nabla_r M_h(r)=\frac{1}{h}\big(r-\mathrm{prox}_{h\rho_\tau}(r)\big).
\]
Let
\begin{equation*}
    \mathcal G_h(\bm\theta_P,\bm\Theta_L)
    =\sum_{i=1}^n M_h \Big(r_i(\bm\theta_P,\bm\Theta_L)\Big)
    +\lambda_2 \sum_{j=1}^p \bm\delta_j^\top \bm L_{\mathrm{sym}} \bm\delta_j,
    \quad
    \mathcal R(\bm\Theta_L)=\lambda_1\sum_{j=1}^p w_j\|\bm\delta_j\|_2.
\end{equation*}
Gradients follow by the chain rule:
\begin{equation*}
    \begin{aligned}
        \nabla_{\bm\alpha}\mathcal G_h &= -\bm Z^\top \bm g, \\
        \nabla_{\bm\beta_G}\mathcal G_h &= -\bm X^\top \bm g, \\
        \nabla_{\bm\delta_j}\mathcal G_h &= -\bm X_{\odot j}^\top \bm g + 2\lambda_2 \bm L_{\mathrm{sym}} \bm\delta_j,
    \end{aligned}
\end{equation*}
with $\bm g=\frac{1}{h}(\bm r-\mathrm{prox}_{h\rho_\tau}(\bm r))$.
Given a step size $t>0$ (via backtracking to ensure sufficient decrease), update
\begin{equation*}
    \bm\alpha \leftarrow \bm\alpha - t \nabla_{\bm\alpha}\mathcal G_h, \quad
    \bm\beta_G \leftarrow \bm\beta_G - t \nabla_{\bm\beta_G}\mathcal G_h,
\end{equation*}
\begin{equation*}
    \bm\delta_j \leftarrow \operatorname{shrink}_{t\lambda_1 w_j} \Big(\bm\delta_j - t \nabla_{\bm\delta_j}\mathcal G_h\Big), \quad
    \operatorname{shrink}_{\kappa}(\bm v) = \Big(1-\frac{\kappa}{\|\bm v\|_2}\Big)_+ \bm v,
\end{equation*}
followed by $\bm\delta_j\leftarrow \mathsf{Proj}_D(\bm\delta_j)$. Nesterov acceleration with standard restart is used. A continuation $h\downarrow h_{\min}$ with $\sqrt n h\to\infty$ preserves asymptotics.

\subsection*{Tuning, initialization, and implementation details}

Tuning $(\lambda_1,\lambda_2)$ and graph hyperparameters (e.g.\ mutual $k$ in $k$-NN or kernel bandwidth) uses \emph{spatially blocked cross-validation}. Partition locations into $K$ spatial folds; for each held-out fold, \emph{construct the graph and Laplacian using only the training folds}, fit the model, and evaluate the held-out check loss. This prevents leakage through cross-fold edges. Adaptive weights use a pilot fit with small $\lambda_1$ and moderate $\lambda_2$, or a purely smooth fit with $\lambda_1=0$; set $w_j=(\|\tilde{\bm\delta}_j\|_2+a)^{-\gamma}$ with a small $a$ for stability. Initialize $(\bm\alpha,\bm\beta_G)$ at the global QR and set $\bm\delta_j=\bm 0$.

Both schemes exploit sparsity. $\bm L_{\mathrm{sym}}$ is stored in a compressed sparse format; matrix–vector products cost $O(nd)$ with average degree $d$. The multipliers $\bm X_{\odot j}$ apply in $O(n)$. In ADMM, systems in \eqref{eq:delta-lin} have the form $\rho_z \bm I + \rho_s \mathrm{diag}(X_{:j}^2) + 2\lambda_2 \bm L_{\mathrm{sym}}$ and admit efficient preconditioned CG with a shared preconditioner (e.g.\ incomplete Cholesky of $\rho_z \bm I + 2\lambda_2 \bm L_{\mathrm{sym}}$). In the smoothed scheme, costs are dominated by proximal evaluations (linear time) and sparse graph multiplications.

\subsection*{Stopping rules and KKT diagnostics}

For ADMM, stopping relies on primal/dual residuals as above, with tolerances
\begin{equation*}
    \varepsilon_{\mathrm{pri}} = \sqrt{n} \varepsilon_{\mathrm{abs}} + \varepsilon_{\mathrm{rel}}\max\left\{\|\bm s\|_2, \left\|\bm y - \bm Z\bm\alpha - \bm X\bm\beta_G - \sum_j \bm X_{\odot j}\bm\delta_j\right\|_2\right\},
\end{equation*}
\begin{equation*}
    \varepsilon_{\mathrm{dual}}=\sqrt{n} \varepsilon_{\mathrm{abs}} + \varepsilon_{\mathrm{rel}}\max\left\{\|\rho_s(\bm s - \bm s^{\mathrm{prev}})\|_2, \left(\sum_j \left\|\rho_z\left(\bm z_j - \bm z_j^{\mathrm{prev}}\right)\right\|_2^2\right)^{1/2} \right\}.
\end{equation*}
For the smoothed scheme, stopping uses relative objective decrease and a KKT residual,
\begin{equation*}
    \big\|\bm Z^\top \bm\psi_\tau(\hat{\bm r})\big\|_2+\big\|\bm X^\top \bm\psi_\tau(\hat{\bm r})\big\|_2
    +\sum_{j=1}^p \mathrm{dist} \Big(\bm X_{\odot j}^\top \bm\psi_\tau(\hat{\bm r})+2\lambda_2 \bm L_{\mathrm{sym}}\hat{\bm\delta}_j,\ \lambda_1 w_j \partial\|\hat{\bm\delta}_j\|_2\Big),
\end{equation*}
where $\bm\psi_\tau(\bm r)$ stacks $\psi_\tau(r_i)=\tau-\mathbf 1\{r_i<0\}$. The distance to the group subdifferential is
\[
\mathrm{dist}( \bm v,\ \lambda_1 w_j \partial\|\hat{\bm\delta}_j\|_2)=
\begin{cases}
\left\|\bm v-\lambda_1 w_j \dfrac{\hat{\bm\delta}_j}{\|\hat{\bm\delta}_j\|_2}\right\|_2,& \hat{\bm\delta}_j\neq 0,\\[6pt]
\max\{\|\bm v\|_2-\lambda_1 w_j,\ 0\},& \hat{\bm\delta}_j=0.
\end{cases}
\]

\subsection*{Practical choices}

Penalty scales can be anchored by heuristics. A convenient grid sets $\lambda_2$ around the median of the nonzero eigenvalues of $\bm L_{\mathrm{sym}}$ divided by $n$, and $\lambda_1$ around the empirical $(1-\tau)\tau$ score scale times $\sqrt{\log p/n}$. In noisy data, a continuation schedule that starts with larger $\lambda_2$ and decreases gradually stabilizes both solvers. When repeated locations occur, connect coincident points by a high-weight edge or merge them.

\section{Asymptotic Theory}\label{sec:theory}

We analyze the SS--SVCQR estimator in Section~\ref{sec:model}. The parametric block is $\bm\theta_P=(\bm\alpha^\top,\bm\beta_G^\top)^\top\in\mathbb R^{q+p}$, and the local deviations are $\bm\Theta_L=\{\bm\delta_j\in\mathbb R^n: j=1,\dots,p\}$, each degree-weight centered on every graph component.

\paragraph{Estimator and penalty.}
At quantile level $\tau$,
\begin{equation*}
    \begin{aligned}
        \mathcal L_n(\bm\theta_P,\bm\Theta_L)
        =
        \sum_{i=1}^n \rho_\tau\Big(Y_i - \bm Z_i^\top\bm\alpha - \bm X_i^\top\bm\beta_G - \sum_{j=1}^p X_{ij} \delta_j(u_i)\Big)
        +
        \sum_{j=1}^p \lambda_{1n} w_{j,n} \|\bm\delta_j\|_2 \\
        +
        \sum_{j=1}^p \lambda_{2n} \bm\delta_j^\top \bm L_{\mathrm{sym}}\bm\delta_j,
    \end{aligned}
\end{equation*}
subject to $\bm 1_{\mathcal C}^\top \bm D\bm\delta_j=0$ for every component $\mathcal C$ and every $j$. Weights $w_{j,n}>0$ can be adaptive, e.g.\ $w_{j,n}=(\|\tilde{\bm\delta}_j\|_2 + a_n)^{-\gamma}$ with $a_n\downarrow 0$, $\gamma\in(0,1]$.

\paragraph{Population target and notation.}
Let $(\bm\alpha_0,\bm\beta_0)$ denote the true parameters in the sense of \eqref{eq:qexog}\cite{FanLi2001}. Write $\beta_{0,j}(u)=\beta_{G0,j}+\delta_{0,j}(u)$ with degree-weight centered $\delta_{0,j}(\cdot)$ and define
\[
    \bm\delta_{j0}=\big(\delta_{0,j}(u_1),\dots,\delta_{0,j}(u_n)\big)^\top, 
    \,
    \mathcal A_0=\{j: \bm\delta_{j0}=\bm 0\},
    \, 
    \mathcal A_0^c=\{1,\dots,p\}\setminus\mathcal A_0.
\]

\begin{assumption}[Sampling design and graph]\label{ass:A1}
Locations $u_i$ are i.i.d.\ with a density bounded away from zero and infinity on a compact $\mathcal U\subset\mathbb R^2$. The graph is constructed either (i) by a symmetric mutual $k_n$-nearest-neighbor rule with $k_n$ bounded away from zero and $k_n=O(1)$ (node-level estimation focus), or (ii) by a symmetric compact-support kernel with bandwidth bounded away from zero. With probability tending to one the graph has $O(1)$ connected components. Degrees satisfy $\max_i d_i=O_p(1)$ and $\min_i d_i\ge c>0$ with high probability.
\end{assumption}
Intuitively, Assumption 1 requires that locations be well spread over a compact region and that the $k$–NN graph does not degenerate as $n$ grows.

\begin{remark}
If one targets continuum limits of $\delta_j(\cdot)$, a growing-degree regime can be imposed (e.g., $k_n\to\infty$, $k_n/n\to 0$ or shrinking bandwidth), together with a suitable scaling of the Laplacian; our theory focuses on node-level estimation and keeps $d_i=O_p(1)$.
\end{remark}

\begin{assumption}[Covariates and Gram matrices]\label{ass:A2}
Covariates are uniformly bounded almost surely, $\|\bm Z_i\|_\infty\le C_Z$ and $\|\bm X_i\|_\infty\le C_X$. The Gram matrices $n^{-1}\bm Z^\top\bm Z$ and $n^{-1}\bm X^\top\bm X$ converge to positive definite limits. For each $j$, $\bm X_{\odot j}$ is not identically zero on any component.
\end{assumption}

\begin{assumption}[Errors and local density]\label{ass:A3}
Let $\varepsilon_i=Y_i-Q_\tau(Y_i\mid \bm Z_i,\bm X_i,u_i)$. Then $\Pr(\varepsilon_i\le 0\mid \bm Z_i,\bm X_i,u_i)=\tau$ a.s., the conditional density $f_{\varepsilon\mid Z,X,U}(t)$ exists and is continuous near $t=0$, and $0<c_f\le f_{\varepsilon\mid Z,X,U}(0)\le C_f<\infty$ a.s.
\end{assumption}

\begin{assumption}[Smoothness of local deviations]\label{ass:A4}
For each $j\in\mathcal A_0^c$, the true deviation vector satisfies $\bm\delta_{j0}^\top \bm L_{\mathrm{sym}}\bm\delta_{j0}\le C_j$ with constants $C_j<\infty$ not depending on $n$.
\end{assumption}

\begin{assumption}[Penalty sequences and adaptive weights]\label{ass:A5}
The smoothness penalty obeys $\lambda_{2n}\to 0$ and $n\lambda_{2n}\to\infty$. The sparsity penalty satisfies $\lambda_{1n}\to 0$ and $\sqrt n \lambda_{1n}\to\infty$. Adaptive weights obey $\sup_{j\in\mathcal A_0^c}w_{j,n}=O_p(1)$ and $\inf_{j\in\mathcal A_0}w_{j,n}\ge c_w>0$ with probability tending to one.
\end{assumption}

\begin{assumption}[Beta–min separation]\label{ass:A6}
There exists $c_\beta>0$ such that for all $j\in\mathcal A_0^c$,
$\ \|\bm\delta_{j0}\|_2 \ge c_\beta \lambda_{1n} w_{j,n}$
for all sufficiently large $n$ with probability tending to one.
\end{assumption}

\begin{assumption}[Restricted curvature for the quantile loss]\label{ass:A7}
Let $\bm G_i=(\bm Z_i^\top,\bm X_i^\top)^\top$ and define
\[
    \bm M = \mathrm E\big[f_{\varepsilon\mid Z,X,U}(0) \bm G_i\bm G_i^\top\big],
    \quad
    \bm V = \tau(1-\tau) \mathrm E\big[\bm G_i\bm G_i^\top\big].
\]
Then $\bm M$ is positive definite. Moreover, for any $\bm v$ in a neighborhood of the origin,
\[
    n^{-1}\sum_{i=1}^n\big\{\rho_\tau(\varepsilon_i - \bm G_i^\top \bm v) - \rho_\tau(\varepsilon_i)\big\}
    =
    \frac12 \bm v^\top \bm M\bm v + o_p(\|\bm v\|_2^2),
\]
uniformly over events where $\|\bm v\|_2=O(n^{-1/2})$.
\end{assumption}

\begin{remark}
Our asymptotics cover fixed $p$. For growing $p$, additional group-sparsity conditions (e.g., restricted eigenvalues) are required and are left for future work.
\end{remark}

\paragraph{KKT characterization.}
Let $\psi_\tau(r)=\tau-\mathbf 1\{r<0\}$ and $\hat r_i$ be the fitted residuals. At any global minimizer $(\hat{\bm\theta}_P,\hat{\bm\Theta}_L)$, the parametric block satisfies
\[
    \sum_{i=1}^n \psi_\tau(\hat r_i) \bm G_i = \bm 0.
\]
For each $\bm\delta_j$,
\[
    \bm X_{\odot j}^\top \bm\psi_\tau(\hat{\bm r}) + 2 \lambda_{2n} \bm L_{\mathrm{sym}}\hat{\bm\delta}_j + \bm D\bm\mu_{j} \in -\lambda_{1n} w_{j, n} \partial\|\hat{\bm\delta}_j\|_2,
\]
with centering constraints $\bm 1_{\mathcal C}^\top \bm D\hat{\bm\delta}_j=0$ on every component and multipliers $\bm\mu_j$.

\subsection*{Deviation rates and selection consistency}

\begin{theorem}[Deviation rates and selection consistency]\label{thm:delta-selection}
Suppose Assumptions~\ref{ass:A1}–\ref{ass:A6} hold. Then:

\begin{enumerate}
\item For each $j\in\mathcal A_0^c$,
\[
    \frac{1}{n}\sum_{i=1}^n\big\{\hat\delta_j(u_i)-\delta_{0,j}(u_i)\big\}^2
    = O_p\Big(\frac{1}{n \lambda_{2n}} + \lambda_{2n} \mathcal S_j\Big),
\]
where $\mathcal S_j$ depends on $\bm\delta_{j0}^\top \bm L_{\mathrm{sym}}\bm\delta_{j0}$ and the spectrum of $\bm L_{\mathrm{sym}}$. The bound is balanced at $\lambda_{2n}\asymp (n \mathcal S_j)^{-1/2}$.

\item The adaptive group penalty correctly separates globally constant and truly varying covariates:
\[
    \Pr\big(\hat{\bm\delta}_j = \bm 0\ \text{for all } j\in\mathcal A_0\big) \rightarrow 1,
    \quad
    \Pr\big(\hat{\bm\delta}_j\ne\bm 0\ \text{for all } j\in\mathcal A_0^c\big)\rightarrow 1.
\]
On the event of correct selection, the deviations satisfy the rate in part (i).
\end{enumerate}
\end{theorem}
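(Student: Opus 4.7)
The plan is to split the argument into (a) establishing the mean-square deviation rate on the event that active/inactive sets are correctly identified, and (b) verifying, via KKT analysis at a constructed oracle estimator, that correct identification occurs with probability tending to one. The two parts are interlocking: part (ii) is proved by showing that the oracle-restricted minimizer (with $\tilde{\bm\delta}_j=\bm 0$ forced on $\mathcal A_0$) already satisfies the full KKT system, and part (i) then applies verbatim to that minimizer.

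\paragraph{Step 1: basic inequality and local expansion.}
First I would exploit the minimizing property $\mathcal L_n(\hat{\bm\theta}_P,\hat{\bm\Theta}_L)+\text{penalties}\le \mathcal L_n(\bm\theta_{P0},\bm\Theta_{L0})+\text{penalties at truth}$ and decompose each loss difference by Knight's identity,
\[
    \rho_\tau(\varepsilon_i-v_i)-\rho_\tau(\varepsilon_i)
    = -v_i \psi_\tau(\varepsilon_i) + \int_0^{v_i}\!\bigl(\mathbf 1\{\varepsilon_i\le t\}-\mathbf 1\{\varepsilon_i\le 0\}\bigr)\,dt,
\]
with $v_i=\bm G_i^\top(\hat{\bm\theta}_P-\bm\theta_{P0})+\sum_{j}X_{ij}(\hat\delta_j(u_i)-\delta_{0,j}(u_i))$. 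The integral is non-negative, and Assumption \ref{ass:A7} supplies a local quadratic minorant $\tfrac{1}{2}\bm v^\top\bm M\bm v$ on a neighborhood of the origin, with an analogous lower bound for the $\bm\delta$-directions through $n^{-1}\bm X_{\odot j}^\top\mathrm{diag}(f_{\varepsilon})\bm X_{\odot j}\succeq c\,\bm I$ on each component (using Assumption \ref{ass:A2} that $\bm X_{\odot j}$ does not vanish on any component and Assumption \ref{ass:A3}'s density lower bound).

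\paragraph{Step 2: deviation rate (part i).}
For $j\in\mathcal A_0^c$, combine the lower bound $\tfrac12 c\,\|\hat{\bm\delta}_j-\bm\delta_{j0}\|_2^2/n$ from Step 1 with the two penalty increments. The smoothness penalty contributes an upper bound of $\lambda_{2n}\bm\delta_{j0}^\top\bm L_{\mathrm{sym}}\bm\delta_{j0}$ after moving $\hat{\bm\delta}_j^\top\bm L_{\mathrm{sym}}\hat{\bm\delta}_j$ to the left; the group penalty is majorized by $\lambda_{1n}w_{j,n}\|\hat{\bm\delta}_j-\bm\delta_{j0}\|_2$, negligible relative to the smoothness term under Assumption \ref{ass:A5}. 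The stochastic term $\sum_i\psi_\tau(\varepsilon_i)X_{ij}(\hat\delta_j(u_i)-\delta_{0,j}(u_i))$ is controlled by expanding $\hat{\bm\delta}_j-\bm\delta_{j0}$ in the eigenbasis of $\bm L_{\mathrm{sym}}$: on eigenspaces with eigenvalue $\mu_k$, the noise acts with variance $\tau(1-\tau)$ per coordinate, while the penalty applies weight $\lambda_{2n}\mu_k$, yielding a variance proxy of order $n^{-1}\sum_k(\lambda_{2n}\mu_k+1/n)^{-1}$ truncated by the Laplacian smoothness constraint on $\bm\delta_{j0}$. Assembling the quadratic inequality and solving gives $n^{-1}\|\hat{\bm\delta}_j-\bm\delta_{j0}\|_2^2=O_p\bigl((n\lambda_{2n})^{-1}+\lambda_{2n}\mathcal S_j\bigr)$, where $\mathcal S_j$ collects $\bm\delta_{j0}^\top\bm L_{\mathrm{sym}}\bm\delta_{j0}$ and the spectral factor appearing in the bias.

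\paragraph{Step 3: selection consistency (part ii).}
I would construct an oracle estimator $(\tilde{\bm\theta}_P,\tilde{\bm\Theta}_L)$ by solving the reduced program with $\tilde{\bm\delta}_j\equiv\bm 0$ for $j\in\mathcal A_0$ and $\tilde{\bm\delta}_j$ free for $j\in\mathcal A_0^c$, subject to the centering constraint. Step 2 applied to the reduced problem gives the stated rate for $\tilde{\bm\delta}_j$, $j\in\mathcal A_0^c$, and standard quantile regression asymptotics give $\|\tilde{\bm\theta}_P-\bm\theta_{P0}\|_2=O_p(n^{-1/2})$. To promote the oracle solution to a global minimizer, verify the full KKT system. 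For $j\in\mathcal A_0^c$, Assumption \ref{ass:A6} guarantees $\|\tilde{\bm\delta}_j\|_2>0$ with probability tending to one, so the smooth subgradient condition is compatible. For $j\in\mathcal A_0$ one must check that some multiplier $\tilde{\bm\mu}_j$ (enforcing $\bm 1_{\mathcal C}^\top\bm D\tilde{\bm\delta}_j=0$) can be chosen so that $\bm X_{\odot j}^\top\bm\psi_\tau(\tilde{\bm r})+\bm D\tilde{\bm\mu}_j$ lies strictly inside the ball of radius $\lambda_{1n}w_{j,n}$. The score is $O_p(\sqrt n)$ by a CLT using bounded $\psi_\tau$ and Assumption \ref{ass:A2}, while the radius $\lambda_{1n}w_{j,n}\ge \lambda_{1n}c_w$ satisfies $\sqrt n\,\lambda_{1n}\to\infty$ by Assumption \ref{ass:A5}, so the inclusion holds with probability tending to one, proving $\hat{\bm\delta}_j=\bm 0$ for all $j\in\mathcal A_0$; conversely Assumption \ref{ass:A6} excludes $\hat{\bm\delta}_j=\bm 0$ for $j\in\mathcal A_0^c$.

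\paragraph{Main obstacle.}
The hard part is the stochastic control for part (i): the nuisance block $\bm\Theta_L$ is $np$-dimensional yet must be controlled uniformly in the non-smooth loss. Standard $M$-estimation arguments only give the parametric $\sqrt n$-rate; the sharp rate $(n\lambda_{2n})^{-1}+\lambda_{2n}\mathcal S_j$ requires exploiting the effective dimension induced by the Laplacian spectrum via a spectral/chaining argument, together with careful treatment of the degree-weight centering projection $\mathsf{Proj}_D$ so that the null space of $\bm L_{\mathrm{sym}}$ does not inflate the variance and the bias-variance balance transports intact from the idealized ridge-type surrogate to the non-smooth check-loss problem.
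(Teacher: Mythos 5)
Your Step 3 (selection consistency) follows essentially the same primal--dual witness route as the paper: force $\tilde{\bm\delta}_j=\bm 0$ on $\mathcal A_0$, show the projected score $\mathsf{Proj}_D(\bm X_{\odot j}^\top\bm\psi_\tau(\tilde{\bm r}))$ is $O_p(\sqrt n)$ and hence strictly inside the ball of radius $\lambda_{1n}w_{j,n}$, and use Assumption~\ref{ass:A6} to keep active groups away from zero. That part is fine. The problem is Step 1/Step 2, where your mechanism for the deviation rate is different from the paper's and, as stated, does not work.

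You derive coercivity in the $\bm\delta_j$-direction from the loss itself, via the claim $n^{-1}\bm X_{\odot j}^\top\diag(f_{\varepsilon})\bm X_{\odot j}\succeq c\,\bm I$. But $\bm X_{\odot j}^\top\diag(f_{\varepsilon})\bm X_{\odot j}=\diag(f_i(0)X_{ij}^2)$ is a diagonal matrix whose entries are bounded above by $C_fC_X^2=O(1)$ and can be arbitrarily close to zero whenever $X_{ij}\approx 0$; Assumption~\ref{ass:A2} only guarantees that $\bm X_{\odot j}$ is \emph{not identically zero} on each component, which is far from a uniform lower bound on the $X_{ij}^2$. So the check loss supplies no curvature in directions $\Delta\bm\delta_j$ supported on observations with small $|X_{ij}|$, and your quadratic inequality has no coercive term to solve against. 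The paper fills exactly this hole with the smoothness penalty: both $\hat{\bm\delta}_j$ and $\bm\delta_{j0}$ lie in the degree-weight-centered subspace $\mathcal S$, so the graph Poincar\'e inequality~\eqref{eq:poincare} gives $\Delta\bm\delta_j^\top\bm L_{\mathrm{sym}}\Delta\bm\delta_j\ge c_\lambda\|\Delta\bm\delta_j\|_2^2$, and after polarizing $\bm\delta_{j0}^\top\bm L_{\mathrm{sym}}\bm\delta_{j0}-\hat{\bm\delta}_j^\top\bm L_{\mathrm{sym}}\hat{\bm\delta}_j$ the term $\lambda_{2n}c_\lambda\|\Delta\bm\delta_j\|_2^2$ is what gets balanced against the $O_p(\sqrt n)\|\Delta\bm\delta_j\|_2$ score term and the $\lambda_{2n}\mathcal S_j^{1/2}\|\Delta\bm\delta_j\|_2$ bias cross-term. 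Your version instead discards the whole $\hat{\bm\delta}_j^\top\bm L_{\mathrm{sym}}\hat{\bm\delta}_j$ term as merely nonnegative, so the penalty contributes only bias and no curvature. Your proposed spectral/effective-dimension control of the stochastic term (the $n^{-1}\sum_k(\lambda_{2n}\mu_k+1/n)^{-1}$ variance proxy) is a genuinely different and potentially sharper route than the paper's crude Cauchy--Schwarz bound $\|\bm X_{\odot j}^\top\bm\psi_\tau(\bm\varepsilon)\|_2=O_p(\sqrt n)$, but you only assert it; carrying it through for the non-smooth check loss with the $\bm X_{\odot j}$ weighting is precisely the hard step you flag and do not execute. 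To repair the argument with minimal change, replace the loss-curvature claim by the Poincar\'e inequality on $\mathcal S$ and keep the Laplacian cross-term via Cauchy--Schwarz, as the paper does.
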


\subsection*{Parametric limit distribution and oracle property}

\begin{theorem}[Asymptotic normality and oracle property]\label{thm:normal-oracle}
Suppose Assumptions~\ref{ass:A1}–\ref{ass:A7} hold. Then
\[
    \sqrt n (\hat{\bm\theta}_P - \bm\theta_{P0})
    \overset{d}{\longrightarrow}\mathcal N\big(\bm 0, \bm M^{-1}\bm V\bm M^{-1}\big).
\]
Moreover, on $\{\hat{\mathcal A} = \mathcal A_0\}$ (which holds with probability tending to one by Theorem~\ref{thm:delta-selection}), the joint limit of $(\hat{\bm\theta}_P, \{\hat{\bm\delta}_j\}_{j\in\mathcal A_0^c})$ coincides with that of the oracle estimator that knows $\bm\delta_{j0}=\bm 0$ for all $j\in\mathcal A_0$.
\end{theorem}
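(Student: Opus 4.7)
The plan is to reduce to an oracle quantile regression on the high-probability correct-selection event and then apply a Bahadur--Knight expansion; the main difficulty is showing that the plug-in local deviations are asymptotically negligible for the parametric score.

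By Theorem~\ref{thm:delta-selection}, the event $\mathcal E_n=\{\hat{\mathcal A}=\mathcal A_0\}$ has probability tending to one, so I work on $\mathcal E_n$ throughout. There, all $\hat{\bm\delta}_j$ with $j\in\mathcal A_0$ vanish, and for $j\in\mathcal A_0^c$ the group penalty is locally differentiable at $\hat{\bm\delta}_j$ with subgradient of order $O_p(\lambda_{1n})$. The parametric KKT equation reduces to $\sum_{i=1}^n\psi_\tau(\hat r_i)\bm G_i=\bm 0$ with
\[
    \hat r_i=\varepsilon_i-\bm G_i^\top(\hat{\bm\theta}_P-\bm\theta_{P0})-\sum_{j\in\mathcal A_0^c}X_{ij}\{\hat\delta_j(u_i)-\delta_{0,j}(u_i)\}.
\]
Using Knight's identity together with Assumption~\ref{ass:A7}, I would quadratize the profiled empirical check loss in the localized variable $\bm t=\sqrt n(\bm\theta_P-\bm\theta_{P0})$:
\[
    \mathcal Z_n(\bm t)=\tfrac12\bm t^\top\bm M\bm t-\bm t^\top\bm W_n-\bm t^\top\bm R_n+o_p(1),
\]
with score $\bm W_n=n^{-1/2}\sum_i\psi_\tau(\varepsilon_i)\bm G_i$ and remainder $\bm R_n$ collecting the effect of the nonparametric plug-in. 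The Lindeberg--Feller CLT under Assumption~\ref{ass:A3} yields $\bm W_n\overset{d}{\longrightarrow}\mathcal N(\bm 0,\bm V)$; Pollard's stochastic equicontinuity handles the non-smoothness of $\psi_\tau$. Given $\bm R_n=o_p(1)$, the Hjort--Pollard convex argmin theorem produces $\sqrt n(\hat{\bm\theta}_P-\bm\theta_{P0})=\bm M^{-1}\bm W_n+o_p(1)$ and hence the stated normal limit.

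The crux is showing $\bm R_n=o_p(1)$. The inputs are the MSE rate $n^{-1}\sum_i\{\hat\delta_j(u_i)-\delta_{0,j}(u_i)\}^2=o_p(1)$ from Theorem~\ref{thm:delta-selection}(i) under Assumption~\ref{ass:A5}, the uniform covariate bounds in Assumption~\ref{ass:A2}, and the degree-weighted centering \eqref{eq:ident}. The centering orthogonalizes both $\hat{\bm\delta}_j$ and $\bm\delta_{j0}$ against constants on each graph component, so the cross term $n^{-1/2}\sum_i X_{ij}\{\hat\delta_j(u_i)-\delta_{0,j}(u_i)\}\bm G_i$ effectively contracts against the component-centered part of $X_{ij}\bm G_i$. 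Combining Cauchy--Schwarz, the MSE rate, and the Laplacian smoothness bound $\lambda_{2n}\hat{\bm\delta}_j^\top\bm L_{\mathrm{sym}}\hat{\bm\delta}_j=O_p(1)$ implied by the penalty, one closes the argument. The oracle property then follows because the oracle KKT system (with $\bm\delta_j\equiv\bm 0$ enforced a priori for $j\in\mathcal A_0$) coincides with that of SS--SVCQR on $\mathcal E_n$, so both estimators share the same joint asymptotic law.

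The principal obstacle is this last step: since the nonparametric $\hat{\bm\delta}_j$ converge at a rate slower than $\sqrt n$, proving $\bm R_n=o_p(1)$ requires genuine use of the Neyman-style orthogonality generated by \eqref{eq:ident}. A clean route is to combine that centering with a discrete Poincar\'e inequality on the normalized Laplacian, which converts the penalty-induced Laplacian smoothness into an $\ell^2$ bound on the mean-zero deviation error and thereby kills the cross term at the right rate.
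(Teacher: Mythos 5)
Your overall architecture matches the paper's: a KKT/Bahadur-type linearization of the parametric score, a CLT for $\bm W_n=n^{-1/2}\sum_i\psi_\tau(\varepsilon_i)\bm G_i$ giving the $\mathcal N(\bm 0,\bm V)$ limit, Slutsky to obtain the sandwich covariance, and reduction to the oracle problem on the correct-selection event from Theorem~\ref{thm:delta-selection}(ii). The substitution of the Hjort--Pollard convex argmin device for the paper's direct manipulation of the first-order conditions is a legitimate stylistic variant and not a problem.

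The genuine gap is exactly where you place it: you never actually prove $\bm R_n=o_p(1)$. The paper isolates this as a standalone result (Lemma~\ref{lem:crossterm}) and closes it by combining the mean-squared deviation rate of Theorem~\ref{thm:delta-selection}(i) with the specific calibration $\lambda_{2n}\asymp n^{-1/2}$, under which $\tfrac1n\sum_i e_{j,i}^2=O_p(n^{-1/2})$, followed by a Cauchy--Schwarz bound on $n^{-1/2}\sum_i f(0)\big(\sum_j X_{ij}e_{j,i}\big)\bm G_i$. Your proposal instead leans on a ``Neyman-style orthogonality generated by \eqref{eq:ident},'' but this mechanism does not exist as described: the degree-weighted centering constraint only removes the component of $\hat{\bm\delta}_j-\bm\delta_{j0}$ along $\bm 1_{\mathcal C}$ in the $\bm D$-inner product, and there is no reason for the deviation error to be orthogonal to the directions $\{X_{ij}\bm G_i\}$, which is what would be needed for the cross term to contract to zero by orthogonality. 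So the ``clean route'' you sketch in the final paragraph would not go through as stated, and the crux of the theorem is left unestablished. Your instinct that a bare Cauchy--Schwarz applied to a nonparametric component converging slower than $\sqrt n$ needs care is a fair one — the normalization in any such bound must be tracked precisely, and this is worth scrutinizing even in the paper's own Lemma~\ref{lem:crossterm} — but flagging the difficulty is not the same as resolving it. To complete your argument in the spirit of the paper, you should state and prove the cross-term lemma explicitly, making the required rate condition on $\lambda_{2n}$ (and hence on the deviation MSE) part of the hypothesis rather than appealing to an orthogonality that the centering constraint does not supply.
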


\begin{lemma}[Cross-term control]\label{lem:crossterm}
Under bounded covariates and fixed $p$, if $\lambda_{2n}\asymp n^{-1/2}$, then
\[
\left\|\frac{1}{\sqrt n}\sum_{i=1}^n f_{\varepsilon\mid Z,X,U}(0)\Big(\sum_{j=1}^p X_{ij} (\hat\delta_j(u_i)-\delta_{0,j}(u_i))\Big)\bm G_i\right\|_2
= o_p(1).
\]
\end{lemma}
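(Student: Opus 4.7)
The plan is to reduce the statement to the oracle-selection event of Theorem~\ref{thm:delta-selection}, express the cross-term coordinatewise, and control it through a KKT-based linearisation of $\hat{\bm\delta}_j$ combined with the Laplacian smoothness inherited by $\bm\eta_j=\hat{\bm\delta}_j-\bm\delta_{j0}$.

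On the event $\{\hat{\mathcal A}=\mathcal A_0\}$, which holds with probability tending to one by Theorem~\ref{thm:delta-selection}(ii), $\bm\eta_j=\bm 0$ for all $j\in\mathcal A_0$, so only $j\in\mathcal A_0^c$ contributes. It therefore suffices to show, for each such $j$ and each coordinate $k\in\{1,\dots,q+p\}$ of $\bm G_i$,
\[
T_{n,j,k}=\frac{1}{\sqrt n}\sum_{i=1}^n f_{\varepsilon\mid Z,X,u_i}(0) X_{ij} G_{i,k}\,\eta_j(u_i)=o_p(1),
\]
and to aggregate over the finitely many $(j,k)$ pairs, which is legitimate because $p,q$ are fixed. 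From Theorem~\ref{thm:delta-selection}(i) with $\lambda_{2n}\asymp n^{-1/2}$ one has $n^{-1}\|\bm\eta_j\|_2^2=O_p(n^{-1/2})$, and the same basic inequality that underpins that rate also delivers the Dirichlet-energy bound $\bm\eta_j^\top\bm L_{\mathrm{sym}}\bm\eta_j=O_p(1)$. Identifiability supplies $\bm 1_{\mathcal C}^\top\bm D\bm\eta_j=0$ on every component.

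I would next linearise the KKT condition for $\hat{\bm\delta}_j$ via Knight's identity and Assumption~\ref{ass:A3}. Expanding $\bm\psi_\tau(\hat{\bm r})$ around $\bm\psi_\tau(\bm\varepsilon)$ yields the first-order representation
\[
\bigl(\bm X_{\odot j}^\top\bm F\bm X_{\odot j}+2\lambda_{2n}\bm L_{\mathrm{sym}}\bigr)\bm\eta_j=\bm X_{\odot j}^\top\bm\psi_\tau(\bm\varepsilon)-2\lambda_{2n}\bm L_{\mathrm{sym}}\bm\delta_{j0}-\bm D\bm\mu_j-\lambda_{1n}w_{j,n}\hat{\bm s}_j+o_p(\cdot),
\]
with $\bm F=\operatorname{diag}(f_{\varepsilon\mid Z,X,u_i}(0))$. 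Substituted into $T_{n,j,k}$, this separates the cross-term into a bias piece of order $\lambda_{2n} n^{-1/2}\|\bm L_{\mathrm{sym}}\bm\delta_{j0}\|_2=O_p(n^{-1/2})$ by Assumption~\ref{ass:A4}; a penalty/subgradient piece of order $O_p(\lambda_{1n})=o_p(1)$ by Assumption~\ref{ass:A5}; and a stochastic score piece linear in the i.i.d.\ mean-zero scores $\psi_\tau(\varepsilon_i)$.

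The main obstacle is the stochastic score piece, because a bare Cauchy--Schwarz using only $\|\bm\eta_j\|_2=O_p(n^{1/4})$ merely yields $T_{n,j,k}=O_p(n^{1/4})$, one rate short of the target. To sharpen, I would split $\bm a_{j,k}=(f_i X_{ij} G_{i,k})_i$ through the spectrum of $\bm L_{\mathrm{sym}}$, annihilate the null-space component via $\bm 1_{\mathcal C}^\top\bm D\bm\eta_j=0$, and apply the Laplacian Cauchy--Schwarz
\[
|\bm a_{j,k}^{\perp,\top}\bm\eta_j|\le\bigl\|\bm L_{\mathrm{sym}}^{+1/2}\bm a_{j,k}^\perp\bigr\|_2\cdot\bigl(\bm\eta_j^\top\bm L_{\mathrm{sym}}\bm\eta_j\bigr)^{1/2}.
\]
Under Assumption~\ref{ass:A1}'s bounded-density sampling and bounded-degree $k$-NN design, spectral concentration of the sample Laplacian around its continuum Laplace--Beltrami limit gives $\|\bm L_{\mathrm{sym}}^{+1/2}\bm a_{j,k}^\perp\|_2=o(\sqrt n)$, which combined with the $O_p(1)$ Dirichlet energy yields the required $o_p(\sqrt n)$ bound before dividing by $\sqrt n$. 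This spectral-concentration step is the delicate one; the remaining manipulations are standard Knight-identity and empirical-process arguments.
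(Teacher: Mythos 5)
Your proposal diverges substantially from the paper's own argument, which is simply: apply Theorem~\ref{thm:delta-selection}(i) to each $\bm e_j$, then bound $\|\bm T_n\|_2$ by a direct Cauchy--Schwarz over $i$ and plug in $\lambda_{2n}\asymp n^{-1/2}$. Your diagnosis that this bare Cauchy--Schwarz only delivers $O_p(n^{1/4})$ is in fact \emph{correct} and is sharper than the paper's own treatment: with $\|\bm e_j\|_2^2=n\cdot O_p(n^{-1/2})=O_p(n^{1/2})$, the bound $\|\frac{1}{\sqrt n}\sum_i a_i\bm G_i\|_2\le n^{-1/2}(\sum_i a_i^2)^{1/2}(\sum_i\|\bm G_i\|_2^2)^{1/2}$ is of order $\sqrt n\,(n^{-1}\sum_i a_i^2)^{1/2}$, and the paper's displayed inequality silently drops this factor of $\sqrt n$ before concluding $O_p(n^{-1/4})$. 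So you have correctly identified that something beyond plain Cauchy--Schwarz is needed.

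However, your proposed repair does not close the gap under the paper's own assumptions, for two reasons. First, the Dirichlet-energy bound $\bm\eta_j^\top\bm L_{\mathrm{sym}}\bm\eta_j=O_p(1)$ cannot hold alongside $\|\bm\eta_j\|_2^2=O_p(n^{1/2})$: on the centered subspace $\mathcal S$ the spectrum of $\bm L_{\mathrm{sym}}$ is sandwiched between the Poincar\'e constant $c_\lambda>0$ of~\eqref{eq:poincare} and $2$, so $\bm\eta_j^\top\bm L_{\mathrm{sym}}\bm\eta_j\asymp\|\bm\eta_j\|_2^2=O_p(n^{1/2})$. Second, and for the same reason, $\|\bm L_{\mathrm{sym}}^{+1/2}\bm a_{j,k}^{\perp}\|_2\asymp\|\bm a_{j,k}^{\perp}\|_2\asymp\sqrt n$ generically, so the Laplacian-weighted Cauchy--Schwarz reproduces the plain Cauchy--Schwarz up to constants and cannot gain the missing power of $n$. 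The spectral-concentration argument you invoke to get $o(\sqrt n)$ requires the sample Laplacian's spectrum to spread toward zero, i.e.\ a continuum Laplace--Beltrami limit; but Assumption~\ref{ass:A1} fixes $k_n=O(1)$ and the paper's remark after that assumption explicitly excludes the growing-degree regime in which such a limit holds. A genuine fix would have to exploit the stochastic structure of the score piece directly (e.g.\ a chaining or empirical-process bound on $\sup_{\bm v}\,n^{-1/2}\bm a_{j,k}^\top(\cdot)^{-1}\bm X_{\odot j}^\top\bm\psi_\tau(\bm\varepsilon)$ over the relevant parameter class) rather than a deterministic norm inequality; as written, both your argument and the paper's stall at $O_p(n^{1/4})$.
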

\paragraph{Standard error estimation.} Estimate the asymptotic variance using the plug-in
\begin{equation*}
    \hat{\bm M} = n^{-1}\sum_{i=1}^n \hat f_i(0) \bm G_i\bm G_i^\top, 
    \quad
    \hat{\bm V} = \tau(1 - \tau) n^{-1}\sum_{i=1}^n \bm G_i\bm G_i^\top,
\end{equation*}
where $\hat f_i(0)$ is a local estimator of $f_{\varepsilon\mid Z,X,U}(0)$ (e.g., kernel density on residuals) or via a smoothed objective with bandwidth $h_n\downarrow 0$ and $\sqrt n h_n\to\infty$. When spatial dependence is a concern, one may use spatial block bootstrap or Conley-type HAC corrections in the sandwich step.

\section{Simulation Study}\label{sec:simulation}

We conducted two complementary simulation experiments. \emph{Scenario~1} uses a single relatively dense data set to provide a visual check that the proposed estimator can correctly recover the spatial structure of the coefficients under a variety of error distributions.  \emph{Scenario~2} is a Monte Carlo study that summarizes selection accuracy, estimation error, and predictive performance over repeated samples.

Both scenarios share the same basic design for spatial locations, covariates, and true coefficient surfaces; only the sample size, error distribution, and the way the results are summarized differ.  The estimation algorithm (ADMM implementation of SS--SVCQR with two-stage adaptive group penalty and spatially blocked cross-validation) is exactly as described in Section~\ref{sec:estimation} and is therefore not repeated here.

\paragraph{Settings} We place $n$ monitoring sites on the unit square $\mathcal U=[0,1]^2$. The locations $u_i=(u_{i1},u_{i2})^\top$ are i.i.d.\ $\mathrm{Unif}([0,1]^2)$, mimicking an irregular but dense network of stations. At each site, we observe a global block $\bm Z_i\in\mathbb R^3$ and a potentially spatially varying block $\bm X_i\in\mathbb R^4$,
\begin{equation*}
    Z_{i0}\equiv 1, \quad
    Z_{i1}\sim\mathcal N(0,1), \quad
    Z_{i2}\sim\mathcal N(0,1), \quad
    X_{ij}\sim\mathcal N(0,1), j = 1, \dots, 4,
\end{equation*}
independently in $i$ and $j$.  

The conditional $\tau$th quantile is
\begin{equation*}
    Q_\tau(Y_i\mid \bm Z_i,\bm X_i,u_i)
    =
    \bm Z_i^\top\bm\alpha_0 + \sum_{j=1}^4 X_{ij} \beta_{0,j}(u_i),
    \quad
    \beta_{0,j}(u)=\beta_{G0,j} + \delta_{0,j}(u),
\end{equation*}
with global effects
\begin{equation*}
    \bm\alpha_0=(3,-1.0,1.5)^\top, 
    \quad
    \bm\beta_{G0}=(5,0,2.5,0)^\top,
\end{equation*}
so that $X_1$ and $X_3$ have nonzero global components, while $X_2$ and $X_4$ would be purely global if their spatial deviations vanished.

We design the deviation fields so that exactly two covariates are truly local:
\begin{align*}
    \delta_{0, 1}(u)
    &= A_1\left\{\sin(2\pi u_1)\cos(2\pi u_2) + c_1(u_1 - 0.5)\right\}, \\
    \delta_{0, 2}(u) & \equiv 0, \\
    \delta_{0, 3}(u)
    &= A_3\left\{(u_1 - 0.5)^2+(u_2 - 0.5)^2\right\}, \\
    \delta_{0, 4}(u) & \equiv 0,
\end{align*}
where $(A_1,c_1,A_3)$ are fixed constants chosen so that the signal amplitude is comparable to the random noise.  
In the implementation we evaluate these fields at the observed sites, construct a $k$-nearest-neighbour graph on $\{u_i\}$, and apply the same degree-weighted centering operator as in Section~\ref{sec:model} so that each $\delta_{0,j}(\cdot)$ has zero degree-weighted mean on every connected component.  Under this centering, $\beta_{G0,j}$ is the degree-weighted spatial average of $\beta_{0,j}(\cdot)$ and $\delta_{0,j}(\cdot)$ represents pure spatial departures.

\subsection{Recovery of Spatial Structure}\label{sec:sim-scenario1}

Scenario~1 uses a single, relatively large dataset ($n=1000$) for each error distribution and focuses on qualitative recovery of the spatial deviation fields.  Responses are generated as
\begin{equation*}
    Y_i = Q_\tau(Y_i\mid \bm Z_i, \bm X_i, u_i) + \varepsilon_i,
\end{equation*}
with $\varepsilon_i$ independent of $(\bm Z_i, \bm X_i, u_i)$.

We consider several error distributions to probe robustness:
\begin{enumerate}[(a)]
  \item Gaussian: $\varepsilon_i\sim\mathcal N(0,\sigma^2)$;
  \item asymmetric Laplace (ALD) with parameter $\tau=0.5$, which aligns the conditional median exactly with $Q_\tau(\cdot)$;
  \item Student-$t_3$ noise;
  \item contaminated normal: a $90\%$--$10\%$ mixture of $\mathcal N(0,\sigma^2)$ and $\mathcal N(0,25\sigma^2)$;
  \item Cauchy errors;
  \item heteroskedastic $t_3$ errors,
  \[
    \varepsilon_i = \sigma(u_i) \eta_i,\qquad
    \sigma(u_i)=0.5+0.5u_{i1},\ \eta_i\sim t_3,
  \]
  which induce a left--right gradient in noise scale.
\end{enumerate}

For the contaminated normal and Cauchy cases we additionally flag as \textit{outliers} those observations with $|\varepsilon_i|$ exceeding a high cut-off (e.g.\ $2.5$ times the Gaussian standard deviation).  These outliers are indicated as hollow circles overlaid on the \emph{true} fields in the plots, illustrating that the proposed quantile-based estimator is robust to a small number of extreme data points.

For each error scenario, we fit SS--SVCQR to $\tau=0.5$ using the spatially blocked cross-validation scheme of Section~\ref{sec:estimation} to select $(\lambda_1,\lambda_2)$, and then compare the estimated deviations $\hat\delta_j(u_i)$ with the truth.  

Figures~\ref{fig:normal}--\ref{fig:hetero} show, for each covariate $X_j$, three panels: the true deviation $\delta_{0,j}(u_i)$, the estimated deviation $\hat\delta_j(u_i)$, and their difference (error).  All panels share a common color scale so that amplitude and sign can be compared directly across variables and scenarios.  In all cases $X_2$ and $X_4$ appear essentially flat, confirming that the method does not spuriously create spatial variation for truly global covariates, while the complex patterns for $X_1$ and $X_3$ are well recovered even under heavy-tailed and heteroskedastic noise.

\begin{figure}[ht]
    \centering
    \includegraphics[width=1\linewidth]{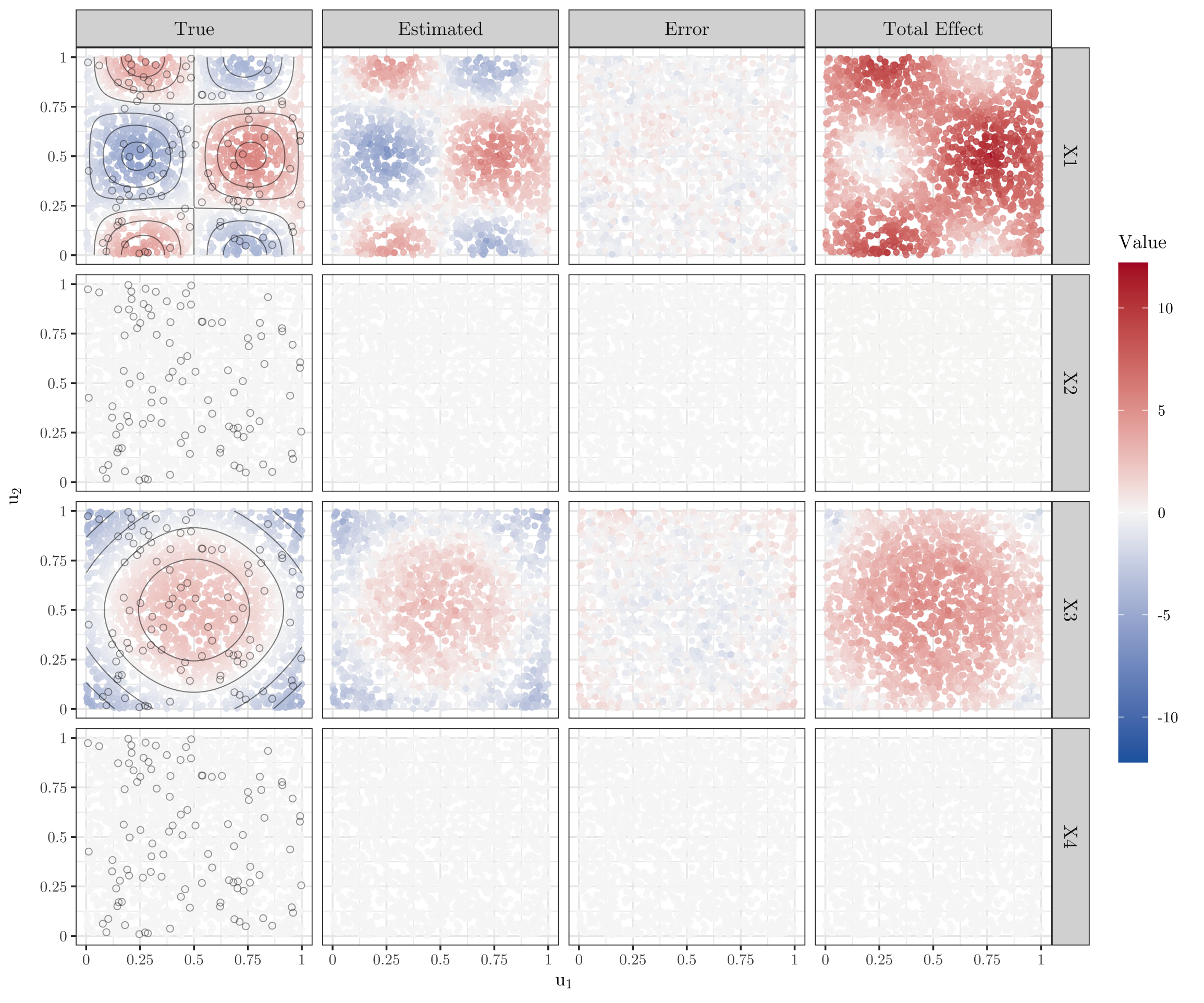}
    \caption{Scenario~1 under normal errors.}
    \label{fig:normal}
\end{figure}

\begin{figure}[ht]
    \centering
    \includegraphics[width=1\linewidth]{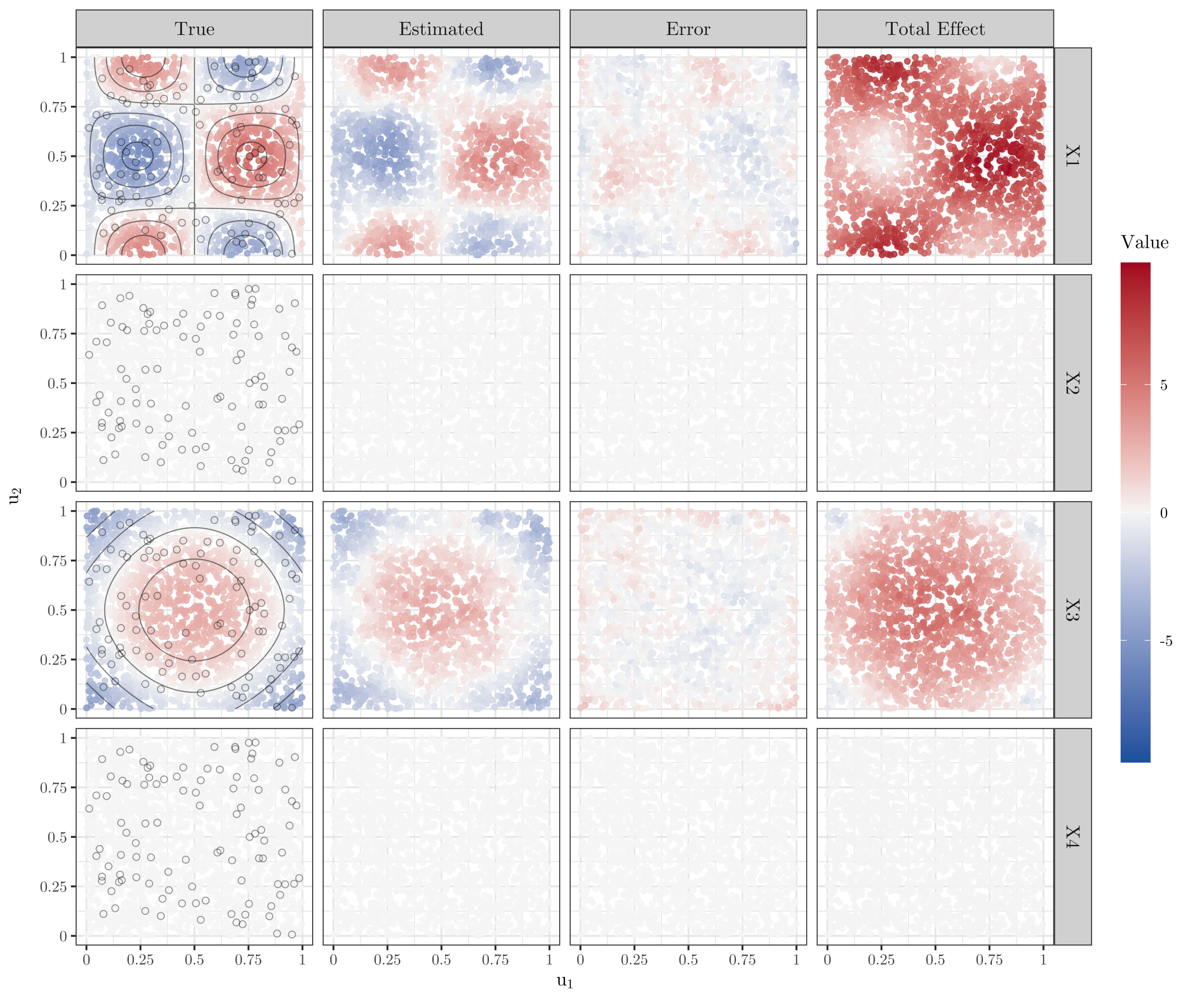}
    \caption{Scenario~1 under asymmetric Laplace errors.}
    \label{fig:ald}
\end{figure}

\begin{figure}[ht]
    \centering
    \includegraphics[width=1\linewidth]{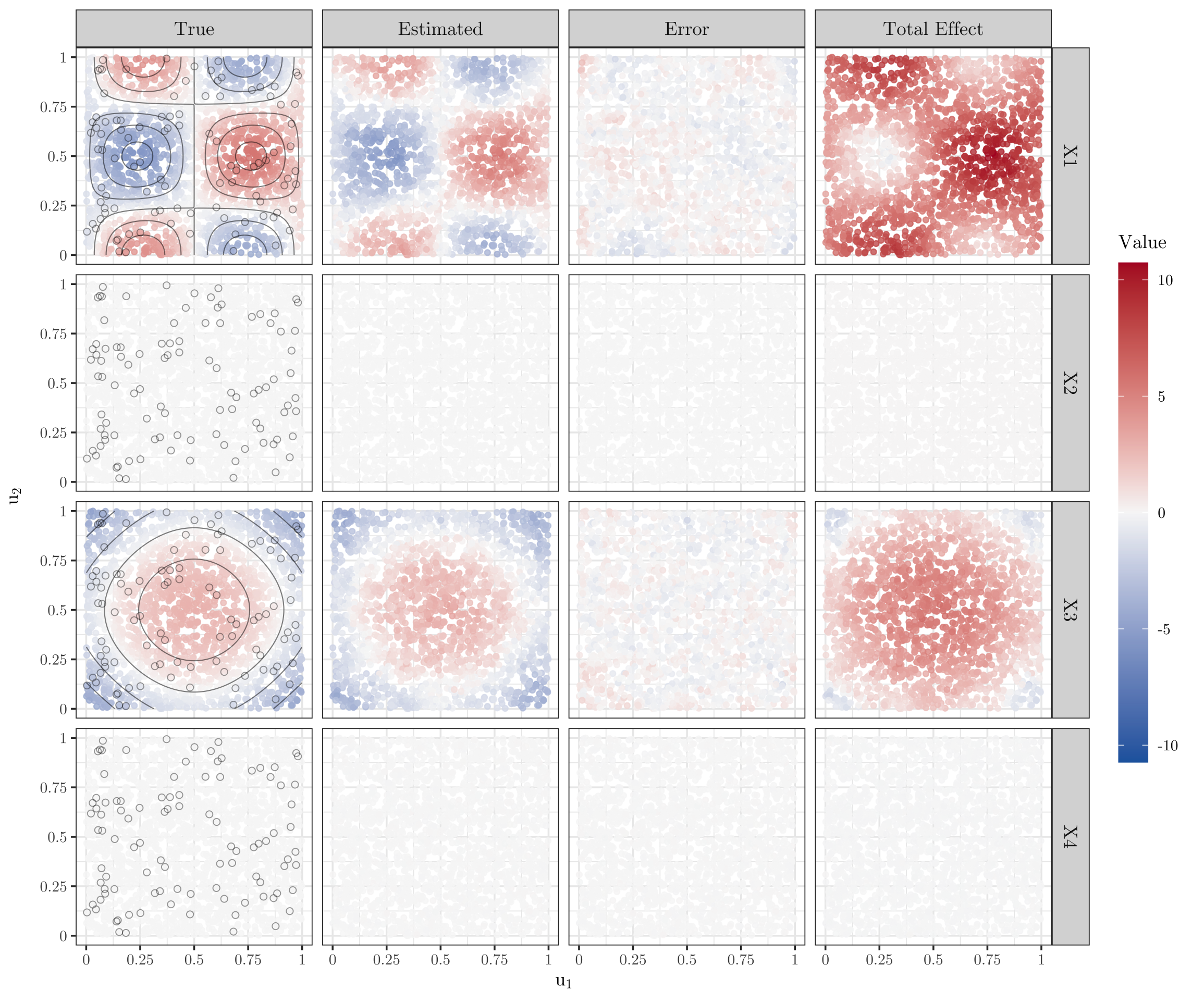}
    \caption{Scenario~1 under Student-$t_3$ errors.}
    \label{fig:t3}
\end{figure}

\begin{figure}[ht]
    \centering
    \includegraphics[width=1\linewidth]{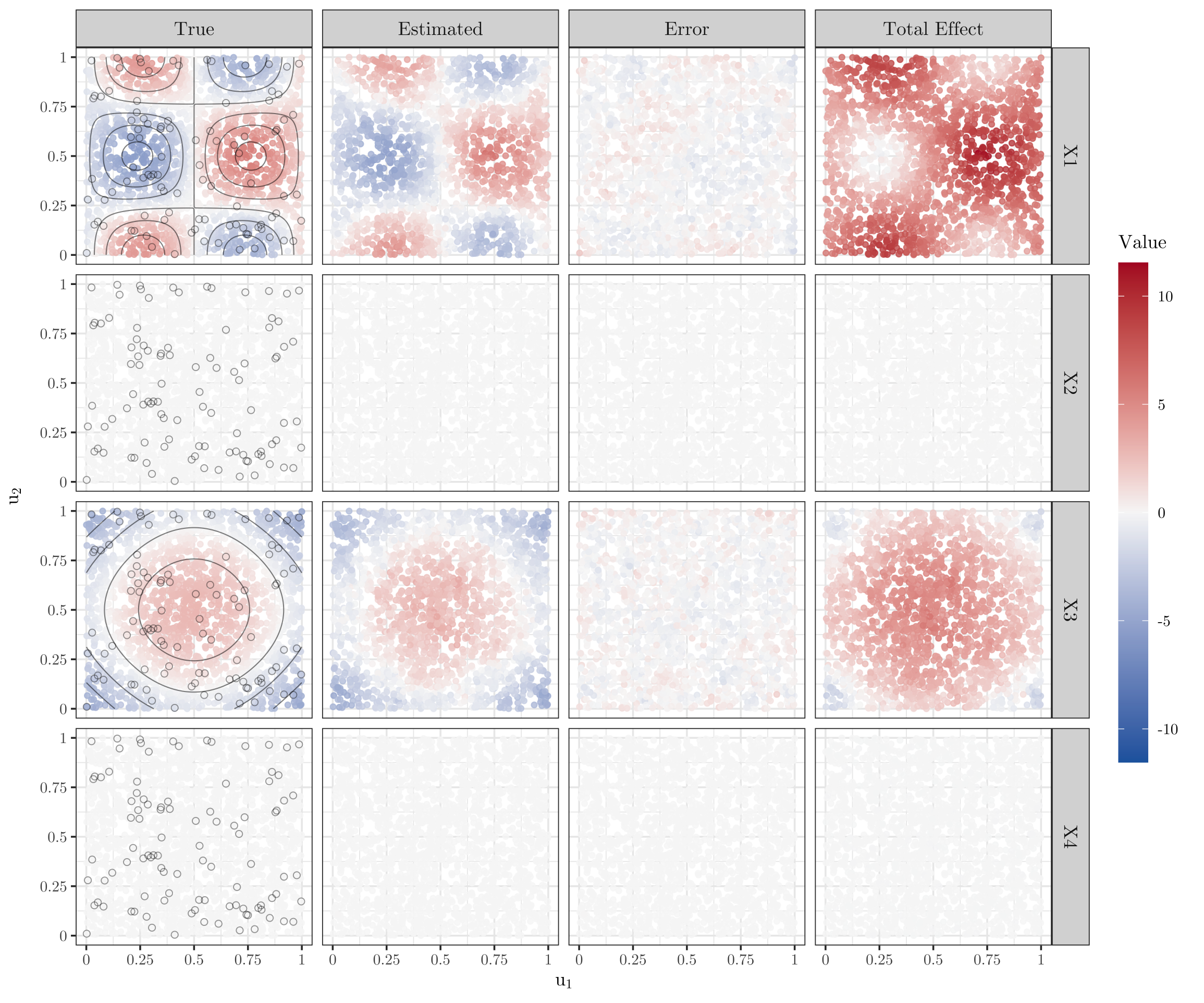}
    \caption{Scenario~1 under contaminated normal errors (hollow circles mark gross outliers).}
    \label{fig:contam_norm}
\end{figure}

\begin{figure}[ht]
    \centering
    \includegraphics[width=1\linewidth]{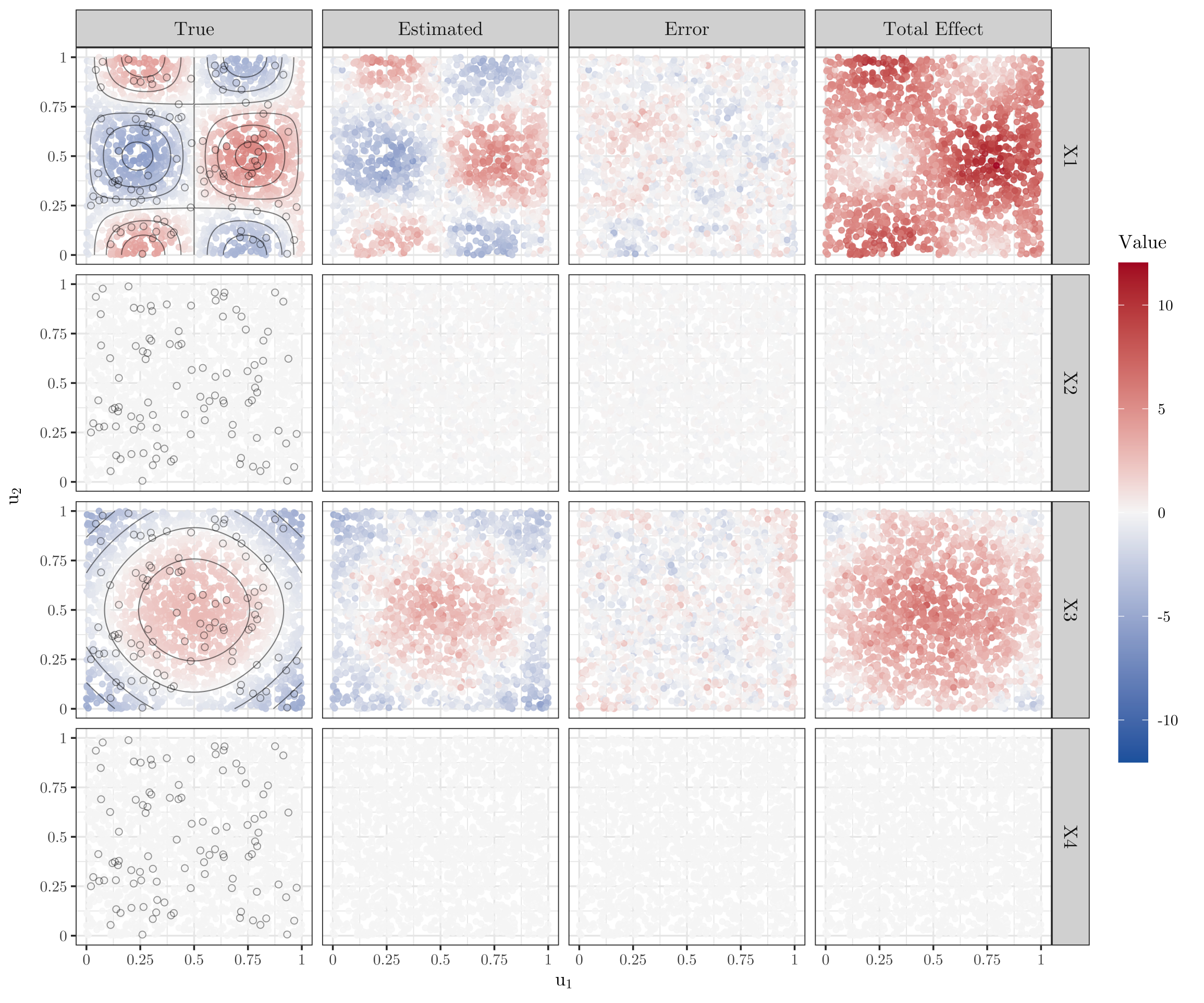}
    \caption{Scenario~1 under Cauchy errors (hollow circles mark extreme outliers).}
    \label{fig:cauchy}
\end{figure}

\begin{figure}[ht]
    \centering
    \includegraphics[width=1\linewidth]{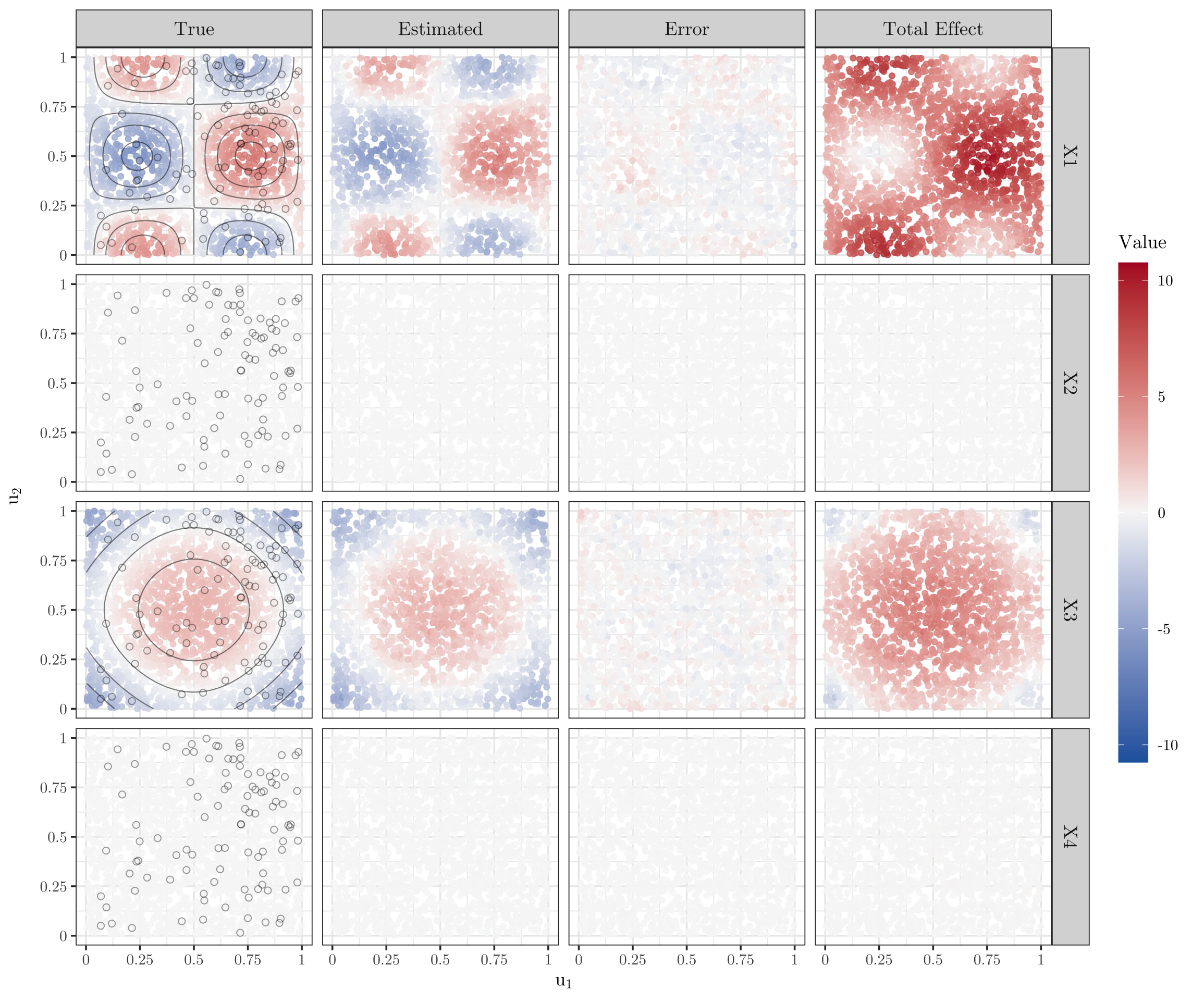}
    \caption{Scenario~1 under heteroskedastic $t_3$ errors with spatially varying scale.}
    \label{fig:hetero}
\end{figure}

Figures~\ref{fig:normal}–\ref{fig:hetero} summarize the results in the six error scenarios. In each figure, the rows correspond to covariates $X_1, \dots, X_4$, while the four columns, from left to right, show the true spatial deviation, the estimated deviation from SS--SVCQR, the pointwise error and the total effect $\beta_{G,j} + \delta_j(u)$ at the target quantile.

The first column shows the simulated \textit{ground truth}: the deviation field $\delta_{0,j}(u)$ evaluated at the sampled locations, together with smoothed contour lines that make the underlying spatial pattern easier to discern. For $X_1$ this reveals an oscillatory wave-like pattern superimposed on a mild trend, while $X_3$ exhibits a smooth dome-shaped hotspot; by contrast, $X_2$ and $X_4$ are truly global and their deviation fields are essentially flat, apart from the overlaid outlier points in the heavy-tailed scenarios.

The second column contains the corresponding estimates $\hat\delta_j(u)$ and shows that the proposed estimator reproduces the main qualitative features of the true fields: the sign changes, level sets and apparent boundaries for $X_1$ and $X_3$ closely track those in the first column, while the panels for $X_2$ and $X_4$ remain visually featureless, indicating that spurious spatial variation is not introduced for covariates that are in fact global.

The third column displays the difference $\hat\delta_j(u_i)-\delta_{0,j}(u_i)$; here the colors fluctuate around zero without forming coherent spatial patches, and there is no obvious residual spatial correlation. This noise-like appearance of the error field indicates that the dominant spatial structure has been absorbed by the model and that the remaining lack-of-fit behaves as idiosyncratic noise rather than as a missing large-scale trend.

Finally, the fourth column reports the total effect of each covariate in location $u_i$, combining the global component and the spatial deviation. For $X_1$ and $X_3$ the panels are dominated by warm colors, suggesting an overall positive association if only a single global coefficient were to be examined; however, the first column reveals sizeable regions where the local deviation is negative and partially cancels the global effect. Without explicitly separating a global level from a spatial deviation, a purely global quantile regression would report a single positive coefficient and completely miss these local reversals, potentially leading to misleading substantive conclusions.

Across all six error mechanisms, including heavy-tailed, contaminated, and heteroskedastic designs, the same qualitative picture emerges: the estimated deviation fields in the second column closely match the truth, the error fields in the third column are spatially structureless, and the total-effect panels in the fourth column make clear that spatial nonstationarity can be substantial even when the overall effect is positive on average.

\subsection{Monte Carlo Assessment}\label{sec:sim-scenario2}

Scenario~2 quantifies performance over repeated samples.  The spatial design, covariates and coefficient surfaces are exactly as in the common setup above, but we now generate $R=100$ independent datasets with sample size $n=1000$ for each error scenario of interest (e.g. Gaussian, ALD, Student-$t_3$, and Cauchy).  For each replicate we fit SS--SVCQR at $\tau=0.5$ with cross-validated $(\lambda_1,\lambda_2)$ and record:
\begin{itemize}
  \item \emph{Global parameter error}
  \[
  \mathrm{PE}_\theta = 
  \|\hat{\bm\alpha}-\bm\alpha_0\|_2
  + \|\hat{\bm\beta}_G-\bm\beta_{G0}\|_2.
  \]
  \item \emph{Deviation mean-squared error} for each covariate,
  \[
  \mathrm{MSE}_{\delta,j}
  = \frac{1}{n}\sum_{i=1}^n
    \big\{\hat\delta_j(u_i)-\delta_{0,j}(u_i)\big\}^2,
    \qquad j=1,\dots,4.
  \]
  \item \emph{Local/global classification accuracy.}  
  We declare $X_j$ to be ``local'' if
  \[
  \sqrt{\frac{1}{n}\sum_{i=1}^n \hat\delta_j(u_i)^2} > \kappa
  \]
  for a small threshold $\kappa$, and ``global'' otherwise.  
  Using the true set of local covariates $\{X_1,X_3\}$, we compute the number of true positives (TP), true negatives (TN), false positives (FP) and false negatives (FN), together with sensitivity $\mathrm{TP}/(\mathrm{TP}+\mathrm{FN})$ and specificity $\mathrm{TN}/(\mathrm{TN}+\mathrm{FP})$.
  \item \emph{Predictive performance} measured by the average check loss on an independent test set of size $n_{\mathrm{test}}$,
  \[
  \mathrm{CL}_{\mathrm{test}}
  =
  \frac{1}{n_{\mathrm{test}}}\sum_{i=1}^{n_{\mathrm{test}}}
  \rho_\tau \Big(
    Y_i^{\mathrm{test}} -
    \hat Q_\tau(Y_i^{\mathrm{test}}
    \mid \bm Z_i^{\mathrm{test}},\bm X_i^{\mathrm{test}},u_i^{\mathrm{test}})
  \Big).
  \]
\end{itemize}

In order to objectively evaluate the performance of the proposed SS--SVCQR model, we compare it with five representative competitors. 
(i) Global quantile regression (QR) uses a purely global specification and thus ignores spatial nonstationarity\cite{quantregPackage}. 
(ii) Standard geographically weighted regression (GWR) treats all covariates as spatially varying\cite{GWmodelPackage}. 
(iii) Mixed GWR assumes that the local and global covariates are known a priori and fit a partially varying model; this can be regarded as an oracle benchmark for methods that must learn the local/global structure. 
(iv) Quantile generalized additive models (QGAM) allow flexible nonlinear effects but do not directly target spatially varying coefficients\cite{Fasiolo2020}. 
(v) The classical spatially varying coefficient (SVC) model is fitted by least squares without quantile weighting\cite{Hastie1993, Gelfand2003SVC}. 
These methods span the spectrum from purely global to fully spatially varying specifications and from mean to quantile regression, and they are chosen to assess three aspects: (a) accuracy of global and varying coefficient estimation, (b) ability to recover the true local versus global covariates, and (c) predictive performance under a range of error distributions. We report the detailed results of the Monte Carlo simulation in the six tables below.\footnote{In Tables \ref{tab:model_QR}--\ref{tab:model_SVCQR}, the values in parentheses represent the standard deviation corresponding to that metric.}
\begin{landscape}
\begin{table}[H]
    \centering
    \caption{Monte Carlo result for quantile regression (QR)}
    \label{tab:model_QR}
    \setlength{\tabcolsep}{4pt}
    \renewcommand{\arraystretch}{1.1}
    \small
    \begin{tabular}{lccccccccc}
        \toprule
               &               &                    \multicolumn{4}{c}{MSE}                    &       &       &               \\
        \cmidrule(lr){3-6}
        Error  & PE$_{\theta}$ & $X_1$         & $X_2$         & $X_3$         & $X_4$         & Sens. & Spec. & CL            \\
        \midrule
        normal & 0.571 (0.157) & 6.824 (0.359) & 0.021 (0.026) & 3.081 (0.166) & 0.024 (0.036) & 0 (0) & 1 (0) & 3.040 (0.337) \\
        ALD    & 0.740 (0.215) & 6.837 (0.350) & 0.039 (0.060) & 3.099 (0.188) & 0.043 (0.062) & 0 (0) & 1 (0) & 3.039 (0.364) \\
        hetero & 0.431 (0.117) & 1.755 (0.244) & 0.080 (0.027) & 0.631 (0.117) & 0.077 (0.028) & 1 (0) & 0 (0) & 2.847 (0.406) \\
        t3     & 0.587 (0.162) & 6.814 (0.363) & 0.019 (0.030) & 3.089 (0.183) & 0.024 (0.032) & 0 (0) & 1 (0) & 2.998 (0.370) \\
        contam & 0.590 (0.188) & 6.824 (0.368) & 0.020 (0.034) & 3.085 (0.177) & 0.023 (0.026) & 0 (0) & 1 (0) & 3.005 (0.317) \\
        cauchy & 0.701 (0.215) & 6.839 (0.361) & 0.033 (0.038) & 3.100 (0.180) & 0.032 (0.049) & 0 (0) & 1 (0) & 5.803 (38.59) \\
        \bottomrule
    \end{tabular}
\end{table}

\begin{table}[H]
    \centering
    \caption{Monte Carlo result for geographically weighted regression}
    \label{tab:model_GWR}
    \setlength{\tabcolsep}{4pt}
    \renewcommand{\arraystretch}{1.1}
    \small
    \begin{tabular}{lccccccccc}
        \toprule
               &               &                    \multicolumn{4}{c}{MSE}                    &       &       &               \\
        \cmidrule(lr){3-6}
        Error  & PE$_{\theta}$ & $X_1$         & $X_2$         & $X_3$         & $X_4$         & Sens. & Spec. & CL            \\
        \midrule
        normal & 0.471 (0.108) & 2.155 (0.207) & 0.064 (0.031) & 0.705 (0.135) & 0.065 (0.028) & 1 (0) & 0 (0) & 2.844 (0.342) \\
        ALD    & 0.615 (0.186) & 2.233 (0.244) & 0.152 (0.071) & 0.799 (0.196) & 0.134 (0.065) & 1 (0) & 0 (0) & 2.956 (0.401) \\
        hetero & 0.483 (0.115) & 2.173 (0.206) & 0.071 (0.040) & 0.728 (0.129) & 0.070 (0.039) & 1 (0) & 0 (0) & 2.809 (0.405) \\
        t3     & 0.508 (0.122) & 2.189 (0.216) & 0.083 (0.046) & 0.743 (0.135) & 0.083 (0.044) & 1 (0) & 0 (0) & 2.825 (0.403) \\
        contam & 0.516 (0.129) & 2.179 (0.230) & 0.090 (0.042) & 0.730 (0.144) & 0.090 (0.037) & 1 (0) & 0 (0) & 2.877 (0.326) \\
        cauchy & 124.1 (1099.3)
               & \num{9.3e5} (\num{9.3e6})
               & \num{1.2e5} (\num{1.2e6})
               & \num{8.9e5} (\num{8.9e6})
               & \num{1.5e5} (\num{1.5e6})
               & 1 (0)
               & 0 (0)
               & 40.78 (305.3) \\
        \bottomrule
    \end{tabular}
\end{table}
\end{landscape}

\begin{landscape}
\begin{table}[H]
    \centering
    \caption{Monte Carlo result for mixed geographically weighted regression}
    \label{tab:model_MGWR}
    \setlength{\tabcolsep}{4pt}
    \renewcommand{\arraystretch}{1.1}
    \small
    \begin{tabular}{lccccccccc}
        \toprule
               &               &                    \multicolumn{4}{c}{MSE}                    &       &       &               \\
        \cmidrule(lr){3-6}
        Error  & PE$_{\theta}$ & $X_1$         & $X_2$         & $X_3$         & $X_4$         & Sens. & Spec. & CL            \\
        \midrule
        normal & 0.255 (0.063) & 0.568 (0.117) & 0.094 (0.022) & 0.238 (0.053) & 0.096 (0.027) & 1 (0) & 0 (0) & 2.869 (0.384) \\
        ALD    & 0.505 (0.159) & 1.025 (0.251) & 0.259 (0.087) & 0.500 (0.154) & 0.236 (0.087) & 1 (0) & 0 (0) & 2.994 (0.404) \\
        hetero & 0.284 (0.087) & 0.637 (0.147) & 0.112 (0.038) & 0.277 (0.075) & 0.113 (0.044) & 1 (0) & 0 (0) & 2.817 (0.411) \\
        t3     & 0.337 (0.102) & 0.723 (0.166) & 0.142 (0.050) & 0.322 (0.085) & 0.146 (0.063) & 1 (0) & 0 (0) & 2.853 (0.391) \\
        contam & 0.353 (0.098) & 0.753 (0.174) & 0.156 (0.056) & 0.328 (0.076) & 0.159 (0.044) & 1 (0) & 0 (0) & 2.901 (0.361) \\
        cauchy & 119.5 (1062.8)
               & \num{2.0e5} (\num{2.0e6})
               & \num{2.8e4} (\num{2.7e5})
               & \num{2.1e5} (\num{2.1e6})
               & \num{3.3e4} (\num{3.3e5})
               & 1 (0)
               & 0 (0)
               & 42.63 (333.9) \\
        \bottomrule
    \end{tabular}
\end{table}

\begin{table}[htbp]
    \centering
    \caption{Monte Carlo result for quantile generalized additive models}
    \label{tab:model_QGAM}
    \setlength{\tabcolsep}{4pt}
    \renewcommand{\arraystretch}{1.1}
    \small
    \begin{tabular}{lccccccccc}
        \toprule
               &               &                    \multicolumn{4}{c}{MSE}                    &       &       &               \\
        \cmidrule(lr){3-6}
        Error  & PE$_{\theta}$ & $X_1$         & $X_2$         & $X_3$         & $X_4$         & Sens. & Spec. & CL            \\
        \midrule
        normal & 0.248 (0.060) & 1.480 (0.152) & 0.019 (0.015) & 0.121 (0.036) & 0.020 (0.016) & 1 (0) & 0 (0) & 2.992 (0.381) \\
        ALD    & 0.427 (0.140) & 1.582 (0.182) & 0.062 (0.048) & 0.246 (0.094) & 0.060 (0.050) & 1 (0) & 0 (0) & 3.018 (0.428) \\
        hetero & 0.230 (0.068) & 1.481 (0.147) & 0.016 (0.015) & 0.118 (0.035) & 0.017 (0.016) & 1 (0) & 0 (0) & 2.971 (0.420) \\
        t3     & 0.270 (0.076) & 1.484 (0.144) & 0.022 (0.019) & 0.134 (0.044) & 0.023 (0.020) & 1 (0) & 0 (0) & 2.969 (0.405) \\
        contam & 0.257 (0.063) & 1.508 (0.156) & 0.022 (0.017) & 0.129 (0.035) & 0.021 (0.019) & 1 (0) & 0 (0) & 3.001 (0.352) \\
        cauchy & 0.578 (1.299) & 3.076 (11.19) & 0.229 (1.675) & 1.309 (9.401) & 0.317 (2.566) & 1 (0) & 0 (0) & 5.814 (38.61) \\
        \bottomrule
    \end{tabular}
\end{table}
\end{landscape}

\begin{landscape}
\begin{table}[H]
    \centering
    \caption{Monte Carlo result for spatially varying coefficient model}
    \label{tab:model_CSVC}
    \setlength{\tabcolsep}{4pt}
    \renewcommand{\arraystretch}{1.1}
    \small
    \begin{tabular}{lccccccccc}
        \toprule
               &               &                    \multicolumn{4}{c}{MSE}                    &       &       &               \\
        \cmidrule(lr){3-6}
        Error  & PE$_{\theta}$ & $X_1$         & $X_2$         & $X_3$         & $X_4$         & Sens. & Spec. & CL            \\
        \midrule
        normal & 0.272 (0.066) & 1.454 (0.142) & 0.025 (0.020) & 0.129 (0.042) & 0.026 (0.020) & 1 (0) & 0 (0) & 2.989 (0.377) \\
        ALD    & 0.467 (0.151) & 1.595 (0.193) & 0.077 (0.052) & 0.274 (0.108) & 0.075 (0.056) & 1 (0) & 0 (0) & 3.026 (0.421) \\
        t3     & 0.316 (0.091) & 1.488 (0.145) & 0.035 (0.035) & 0.171 (0.098) & 0.041 (0.049) & 1 (0) & 0 (0) & 2.962 (0.410) \\
        contam & 0.333 (0.088) & 1.502 (0.146) & 0.039 (0.036) & 0.164 (0.048) & 0.036 (0.028) & 1 (0) & 0 (0) & 2.987 (0.365) \\
        cauchy & 114.8 (1016.2)
               & \num{4.9e5} (\num{4.9e6})
               & \num{7.0e4} (\num{7.0e5})
               & \num{5.0e5} (\num{5.0e6})
               & \num{8.4e4} (\num{8.4e5})
               & 1 (0)
               & 0 (0)
               & 47.69 (371.2) \\
        \bottomrule
    \end{tabular}
\end{table}
\begin{table}[H]
    \centering
    \caption{Monte Carlo result for SS-SVCQR}
    \label{tab:model_SVCQR}
    \setlength{\tabcolsep}{4pt}
    \renewcommand{\arraystretch}{1.1}
    \small
    \begin{tabular}{lccccccccc}
        \toprule
               &               &                    \multicolumn{4}{c}{MSE}            &       &       &               \\
        \cmidrule(lr){3-6}
        Error  & PE$_{\theta}$ & $X_1$         & $X_2$         & $X_3$         & $X_4$ & Sens. & Spec. & CL            \\
        \midrule
        normal & 0.098 (0.027) & 0.195 (0.014) & 0 (0) & 0.159 (0.010) & 0 (0) & 1 (0) & 1 (0)         & 0.474 (0.009) \\
        ALD    & 0.193 (0.054) & 0.405 (0.035) & 0 (0) & 0.341 (0.038) & 0 (0) & 1 (0) & 0.985 (0.086) & 1.075 (0.023) \\
        hetero & 0.086 (0.021) & 0.182 (0.012) & 0 (0) & 0.150 (0.009) & 0 (0) & 1 (0) & 1 (0)         & 0.494 (0.012) \\
        t3     & 0.113 (0.028) & 0.234 (0.016) & 0 (0) & 0.194 (0.012) & 0 (0) & 1 (0) & 1 (0)         & 0.630 (0.015) \\
        contam & 0.107 (0.032) & 0.226 (0.016) & 0 (0) & 0.186 (0.015) & 0 (0) & 1 (0) & 1 (0)         & 0.634 (0.018) \\
        cauchy & 0.156 (0.041) & 0.333 (0.030) & 0 (0) & 0.277 (0.030) & 0 (0) & 1 (0) & 0.990 (0.070) & 6.611 (21.23) \\
        \bottomrule
    \end{tabular}
\end{table}
\end{landscape}

Tables~\ref{tab:model_QR}--\ref{tab:model_SVCQR} summarize Monte Carlo means and standard deviations of the above metrics in $R=100$ replicates in Scenario~2.  For the purely global quantile regression (QR) in Table~\ref{tab:model_QR}, the global parameter error $\mathrm{PE}_\theta$ and the deviation MSEs are substantially larger than those of the spatial competitors whenever there is spatial nonstationarity.  Moreover, QR classifies all covariates as global, leading to a sensitivity essentially equal to zero and a specificity equal to one, and its predictive performance, measured by $\mathrm{CL}_{\mathrm{test}}$, is clearly inferior once the error distribution departs from the Gaussian case.

Standard GWR and mixed GWR (Tables~\ref{tab:model_GWR} and \ref{tab:model_MGWR}) reduce $\mathrm{PE}_\theta$ and the deviation MSEs relative to QR, but do so by treating almost all effects as spatially varying.  As a consequence, sensitivity is essentially one, while specificity is close to zero across all error distributions, indicating severe overfitting of local structure.  QGAM (Table~\ref{tab:model_QGAM}) shows a similar pattern: achieves smaller $\mathrm{PE}_\theta$ and deviation MSEs than QR, but again classifies nearly all covariates as local, with sensitivity near one and specificity near zero.  The classical SVC model in Table~\ref{tab:model_CSVC} behaves analogously, and under heavy-tailed Cauchy errors, its parameter error and deviation MSEs can become extremely large despite reasonably small check loss in some scenarios.

In contrast, the proposed SS--SVCQR estimator in Table~\ref{tab:model_SVCQR} delivers uniformly small global parameter errors and deviation MSEs across all error settings.  With the larger penalty used here, SS--SVCQR achieves sensitivity essentially equal to one and specificity close to one for normal, heteroscedastic, $t_3$, and contaminated-normal errors, and only slightly below one for ALD and Cauchy errors. Thus, SS--SVCQR is the only method that simultaneously recovers both truly local covariates $\{X_1,X_3\}$ and truly global covariates $\{X_2,X_4\}$ with high precision. Its predictive check loss $\mathrm{CL}_{\mathrm{test}}$ is uniformly lower than, or comparable to, that of the best competitor for light- and moderately heavy-tailed errors, and remains stable even under Cauchy errors, where several fully spatial methods suffer from dramatic loss of efficiency.  These results suggest that SS--SVCQR achieves near-oracle estimation of global coefficients, reliably recovers local structure, and maintains competitive predictive performance over a wide range of error distributions in Scenario~2.

Monte Carlo averages and standard deviations of these metrics show that SS--SVCQR has high sensitivity and specificity for distinguishing local (varying) from global (constant) covariates, even under heavy-tailed and heteroskedastic errors; the deviation MSEs remain small relative to the signal variance, and the global coefficients are estimated with near-oracle precision.  Predictive performance is competitive with, or better than, fully spatially varying quantile regression and markedly improves over a purely global quantile regression when spatial nonstationarity is present.

\section{Real Data}\label{sec:realdata}

We illustrate the proposed method using the Lucas County house price data, a benchmark hedonic dataset that has been widely used in spatial econometrics and varying-coefficient modeling. The data contain 25,357 single-family home sales recorded by the Lucas County auditor between 1993 and 1998, together with structural characteristics and projected coordinates in the Ohio North State Plane system (EPSG:2834). Figure \ref{fig:log_price} maps the log-sale price for all transactions within the administrative boundary of Lucas County\cite{spData, spDataLarge}.

\begin{figure}[ht]
    \centering
    \includegraphics[width=1\linewidth]{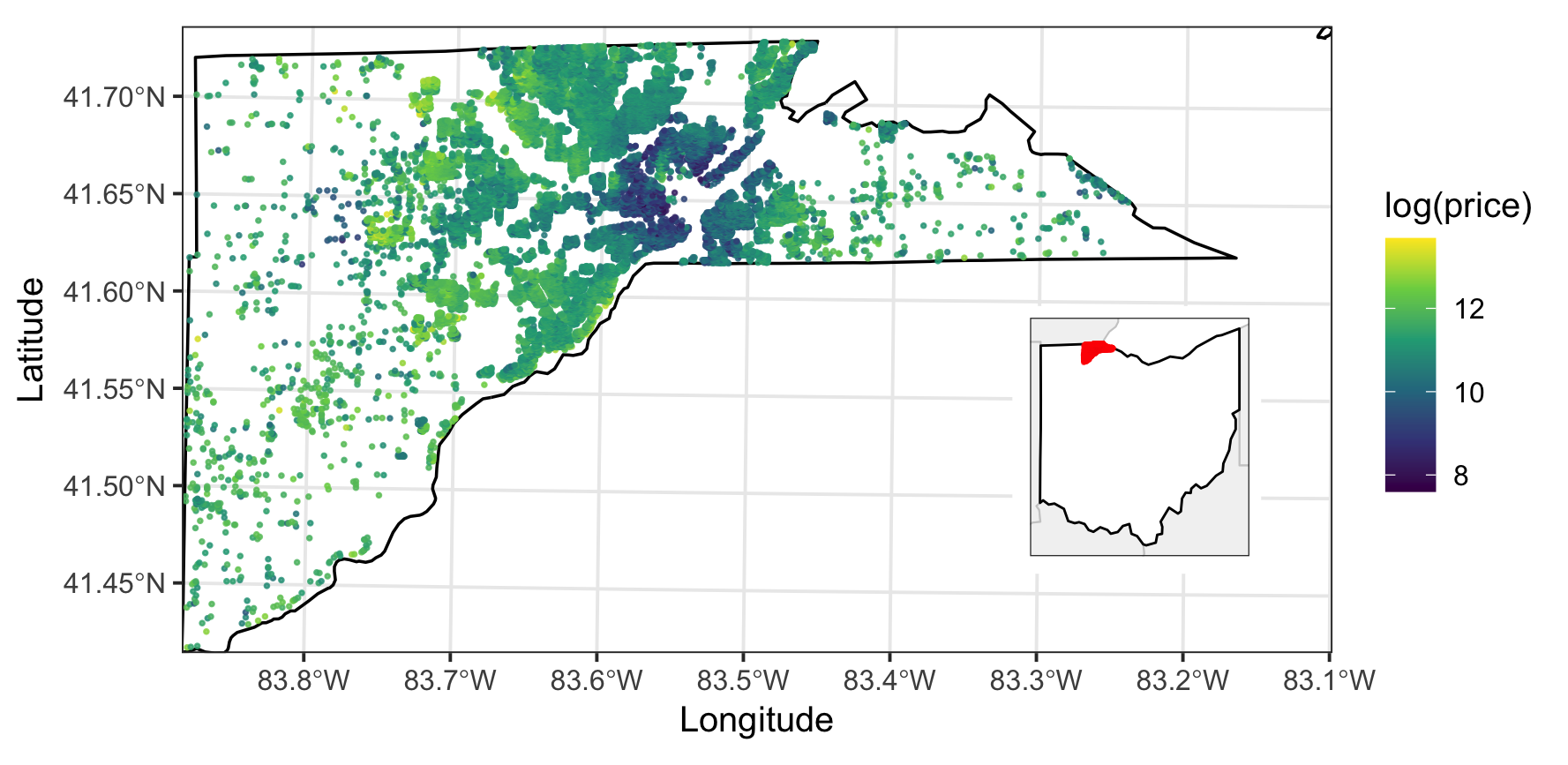}
    \caption{Spatial distribution of log house prices in Lucas County, Ohio. \textit{Points show 25,357 single-family transactions recorded by the Lucas County auditor between 1993 and 1998 (\textit{spData} \textit{house} dataset). Colours indicate the log sale price, with darker shades corresponding to cheaper properties and yellow–green tones to more expensive ones. The polygon outlines the administrative boundary of Lucas County. Coordinates are shown as longitude and latitude; the original data are provided in the Ohio North State Plane projection (EPSG:2834).}}
    \label{fig:log_price}
\end{figure}

The map reveals pronounced spatial heterogeneity in house values. A corridor of relatively low-priced properties appears in the central urban core along the Maumee River, whereas higher prices concentrate in the western and northwest suburbs and in a few pockets along the eastern edge of the county. The gradient is not purely radial: there are inexpensive enclaves embedded in otherwise high-priced neighborhoods and vise versa, suggesting that location interacts with local housing characteristics in a complex way rather than through a simple distance-to-CBD effect.

These patterns suggest that the marginal effects of housing attributes are unlikely to be spatially constant. For example, additional floor area may command a larger premium in high-amenity western suburbs than in older inner-city neighborhoods, and age or garage size may matter differently across locations. In addition, urban housing markets are known to be skewed and heavy-tailed, making least-squares specifications sensitive to outliers. To capture both spatially varying relationships and distributional heterogeneity, we apply our sparse–smooth spatially varying-coefficient quantile regression to model selected structural covariates with local coefficients, while keeping other controls as global effects.

Let $Y_i$ denote the sale price of house $i$ and $u_i\in\mathcal U\subset\mathbb R^2$ its location. Following the hedonic literature on Lucas County, we work with the log–price $Y_i^\ast = \log(Y_i)$, which stabilizes the right–skewed price distribution and allows the regression coefficients to be interpreted as semi–elasticities. The original coordinates are transformed to longitude–latitude and then linearly rescaled to $[0,1]^2$ for numerical stability in the graph construction described in Section~\ref{sec:estimation}.

Consistent with previous hedonic studies on this data set, we focus on structural attributes of the dwelling as the main drivers of price differences: building age, interior living area, number of rooms and bedrooms, and garage size; see, for example, the specifications used in spatial autoregressive and geographically weighted models. In addition, we control for building type (number of stories, exterior wall material, garage type) and for sale year dummies that capture market-wide temporal trends.

In the notation of Section~\ref{sec:model}, we set $Y_i=Y_i^\ast$ and partition the covariates into a purely global block $\bm Z_i$ and a potentially spatially varying block $\bm X_i$ as follows. The global block collects a constant, structural–type indicators and time dummies
\begin{equation*}
    \bm Z_i
    =
    \big(1,
    \text{stories}_i,
    \text{wall}_i,
    \text{garageType}_i,
    \text{syear}_i,
    \text{baths}_i,
    \text{halfbaths}_i\big)^\top,
\end{equation*}
where $\text{stories}_i$, $\text{wall}_i$, $\text{garageType}_i$ and $\text{syear}_i$ are encoded as sets of dummy variables. These effects are interpreted as global in the sense that the impact of being, say, a two–story brick house or selling in 1997 is assumed not to vary smoothly across the county.

The potentially spatially varying block contains continuous structural characteristics for which the hedonic literature reports marked submarket heterogeneity:
\begin{equation*}
    \bm X_i =
    \big(\text{age}_i, \log(\text{TLA}_i), \log(\text{lotsize}_i),
    \text{rooms}_i, \text{beds}_i, \text{garageArea}_i\big)^\top.
\end{equation*}
Here $\text{age}_i$ is the age of the house at sale, $\text{TLA}_i$ is the total living area (square feet), $\text{rooms}_i$ and $\text{beds}_i$ are the number of rooms and bedrooms, and $\text{garageArea}_i$ is the garage floor area (with zero for houses without a garage). All components of $\bm X_i$ are centered and scaled to unit variance before fitting so that the group penalty acts comparably across predictors. In the SS--SVCQR formulation, each coefficient is decomposed as
\[
\beta_j(u) = \beta_{G,j} + \delta_j(u),
\quad j=1,\dots,p,
\]
where $\beta_{G,j}$ represents the global (location–invariant) marginal effect of predictor $X_{ij}$ on the $\tau$th conditional quantile of log–price, and $\delta_j(u)$ captures a spatial deviation field around this global level. The adaptive group penalty in~\eqref{eq:obj-main} decides, for each of these structural variables, whether $\delta_j(\cdot)$ should be shrunk to zero (globally constant effect) or retained as a smoothly varying coefficient.

Table~\ref{tab:lucas-vars} summarizes the variables used in the analysis and their roles in the model.
\begin{landscape}
\begin{table}[t]
\centering
\caption{Variables used in the Lucas County housing application and their roles in the SS--SVCQR model. SVC stands for spatially varying coefficient.}
\label{tab:lucas-vars}
\begin{tabular}{llll}
\hline
Symbol / name & Data field & Description & Role in model \\
\hline
$Y$ & \texttt{price} & Sale price (USD), log--transformed & Response \\
$u_1,u_2$ & coordinates & House location, rescaled to $[0,1]^2$ & Spatial index \\
\hline
\multicolumn{4}{l}{\emph{Global covariates $\bm Z_i$}}\\
Intercept & -- & Constant term & Global effect \\
\texttt{stories} & \texttt{stories} & Building type (number of stories, factor) & Global effect \\
\texttt{wall} & \texttt{wall} & Exterior wall material (factor) & Global effect \\
\texttt{garageType} & \texttt{garage} & Garage type (factor) & Global effect \\
\texttt{syear} & \texttt{syear} & Sale year 1993--1998 (factor) & Global effect \\
\texttt{baths} & \texttt{baths} & Number of full bathrooms & Global effect \\
\texttt{halfbaths} & \texttt{halfbaths} & Number of half bathrooms & Global effect \\
\hline
\multicolumn{4}{l}{\emph{Potentially spatially varying covariates $\bm X_i$}}\\
$X_1$ & \texttt{age} & House age at sale (years), standardized & SVC candidate \\
$X_2$ & \texttt{TLA} & Total living area, log and standardized & SVC candidate \\
$X_3$ & \texttt{lotsize} & Lot size, log and standardized & SVC candidate \\
$X_4$ & \texttt{rooms} & Total number of rooms, standardized & SVC candidate \\
$X_5$ & \texttt{beds} & Number of bedrooms, standardized & SVC candidate \\
$X_6$ & \texttt{garageArea} & Garage area (sq ft), standardized & SVC candidate \\
\hline
\end{tabular}
\end{table}
\end{landscape}
Our choice of which regressors are allowed to vary over space is guided by both the hedonic housing literature on the Lucas County dataset and more recent studies using spatially varying coefficient (SVC) or multiscale geographically weighted regression models.

For Lucas County data, spatial econometric and semiparametric analyzes such as LeSage and Pace (2009, Chap.~7), Zhu et al. (2011), Mínguez et al. (2013), and Basile et al. (2014) consistently show that prices are mainly driven by structural characteristics: floor area, lot size, number of rooms and bedrooms, and age, together with strong spatial trend components. In contrast, design and transaction dummies (wall material, storey, garage type, year-of-sale dummies) typically enter as conventional fixed effects whose estimated coefficients are comparatively stable across space. These findings suggest that the marginal willingness to pay for size and vintage interacts more strongly with location than the effects of building style or sale year.

More broadly, recent SVC and multiscale GWR applications to hedonic pricing problems report a similar pattern. Murakami et al. (2017) and Dambon et al. (2022) document pronounced spatial variation in the coefficients of structural attributes such as floor area and age, while some categorical or regulatory variables behave almost globally. Chica-Olmo et al. (2019) and Hong et al. (2022) likewise find that structural and locational amenities often require flexible spatial effects, whereas purely institutional or temporal indicators can be treated as global without substantial loss of fit. Empirical work on housing and land prices using mixed GWR or related frameworks reaches similar conclusions, with some coefficients operating at near-global scales and others varying at very local scales (e.g., Fotheringham, Yang and Kang, 2017; Muto et al., 2023).

Motivated by this evidence, in our SS--SVCQR specification, we treat the main structural characteristics---age, total living area, lot size, number of rooms and bedrooms, and garage size---as candidate spatially varying coefficients. These variables represent the quantity and effective quality of housing services, whose marginal valuation is known to depend on local neighborhood conditions and submarkets. In contrast, the intercept, the design-related dummies (storey, wall material, garage type), the number of full and half bathrooms, and the sale-year dummies enter as global coefficients. They capture broad market conditions, building regulations, and style differences that are expected to be more homogeneous within Lucas County. Importantly, the group penalty in our estimator can still shrink a \textbf{candidate} spatially varying effect back to a global one if the data do not support substantial spatial heterogeneity.

To assess out-of-sample performance, we randomly partition the 25,357 transactions into a training set (80\%) and a test set (20\%). The split is performed at the observation level, stratified by sale year to preserve the temporal composition of the sample across folds. All tuning parameters for SS--SVCQR (the global–local penalty $\lambda$ and the graph-smoothing parameter) are selected on the training set by five-fold cross-validation, minimizing the average check loss at the target quantile. Competing models (Global QR, QGAM, and GWR) are tuned using analogous procedures: for QGAM we choose the effective degrees of freedom via REML, while for GWR we select the bandwidth by cross-validation. The reported test losses and pseudo-$R^2$ are computed on the held-out test set.

\begin{table}[t]
    \centering
    \caption{Out-of-sample performance at the median ($\tau=0.5$) in Lucas County. The metric $\mathrm{CL}_{0.5}$ is the average check loss on the test set; $R^2_{0.5}$ is a pseudo-$R^2$ relative to a constant-quantile null model.}
    \label{tab:lucas-model-comp}
    \begin{tabular}{lccc}
        \toprule
        Model & $\mathrm{CL}_{0.5}$ & $R^2_{0.5}$ & Improvement over Global QR \\
        \midrule
        SS--SVCQR             & 0.1310 & \texttt{0.64} & $+53.2\%$ \\
        Global QR             & 0.2799 & \texttt{0.49} & --- \\
        QGAM                  & 0.1464 & \texttt{0.50} & $+47.6\%$ \\
        GWR                   & 2.4878 & \texttt{-7.51} & strongly negative \\
        \bottomrule
    \end{tabular}
\end{table}
In the median, SS--SVCQR achieves the lowest out-of-sample check loss among all competitors (Table~\ref{tab:lucas-model-comp}). Compared to the global QR benchmark, SS–SVCQR reduces the median check loss by approximately 53.2\%. Compared to the flexible QGAM baseline, the reduction is approximately 9.5\%. The GWR fit performs poorly in this setting, with substantially inflated check loss and near-zero pseudo-$R^2$, reflecting its instability under local collinearity and the lack of explicit regularization.

\begin{table}[t]
    \centering
    \caption{Estimated global coefficients and spatial deviations for the SS--SVCQR
    model at $\tau=0.5$.  The column $\hat\beta_{G,j}$ reports the global effect,
    $\|\hat\delta_j\|_2$ is the $\ell_2$-norm of the deviation field, and
    $[\min\hat\delta_j,\max\hat\delta_j]$ summarizes its range over sampled
    locations.}
    \label{tab:lucas-global-local}
    \begin{tabular}{lccc}
    \toprule
    Covariate $X_j$ & $\hat\beta_{G,j}$ & $\|\hat\delta_j\|_2$ &
    $[\min\hat\delta_j,\max\hat\delta_j]$ \\
    \midrule
    House age (years)        & $-0.271$ & 0.168 &
    $[-1.05,\ 1.16]$ \\
    Floor area (log TLA)     & $0.244$  & 0.036 &
    $[-0.27,\ 0.38]$ \\
    Number of rooms          & $-0.031$ & 0.058 &            $[-0.57,\ 0.72]$ \\
    Number of bedrooms       & $0.0067$ & $8.4\times10^{-7}$ & $[-7.2\times10^{-6}, \ 6.1\times10^{-5}]$ \\
    Garage area (sq.~ft.)    & $-0.032$ & 0.100 &
    $[-1.07,\ 0.85]$ \\
    Lot size (log)           & $0.175$  & 0.110 &
    $[-0.78,\ 0.99]$ \\
    \bottomrule
    \end{tabular}
\end{table}
Table~\ref{tab:lucas-global-local} summarizes the decomposition of each continuous predictor into a global effect and a spatial deviation. The signs and magnitudes of the global coefficients are broadly in line with hedonic expectations and prior evidence: larger floor area and lot size are associated with higher prices, whereas older dwellings tend to be discounted. 

The group-wise deviation norms reveal that bedrooms are essentially a purely global predictor: $\|\hat\delta_{\text{beds}}\|_2$ is numerically negligible. By contrast, age, garage area, and lot size exhibit substantial spatial heterogeneity, with deviation ranges on the order of $\pm 1$ on the log price scale. Floor area behaves as a predominantly global covariate with only mild spatial variation, while the number of rooms shows intermediate behaviour.

\begin{figure}[ht]
    \centering
    \includegraphics[width=\textwidth]{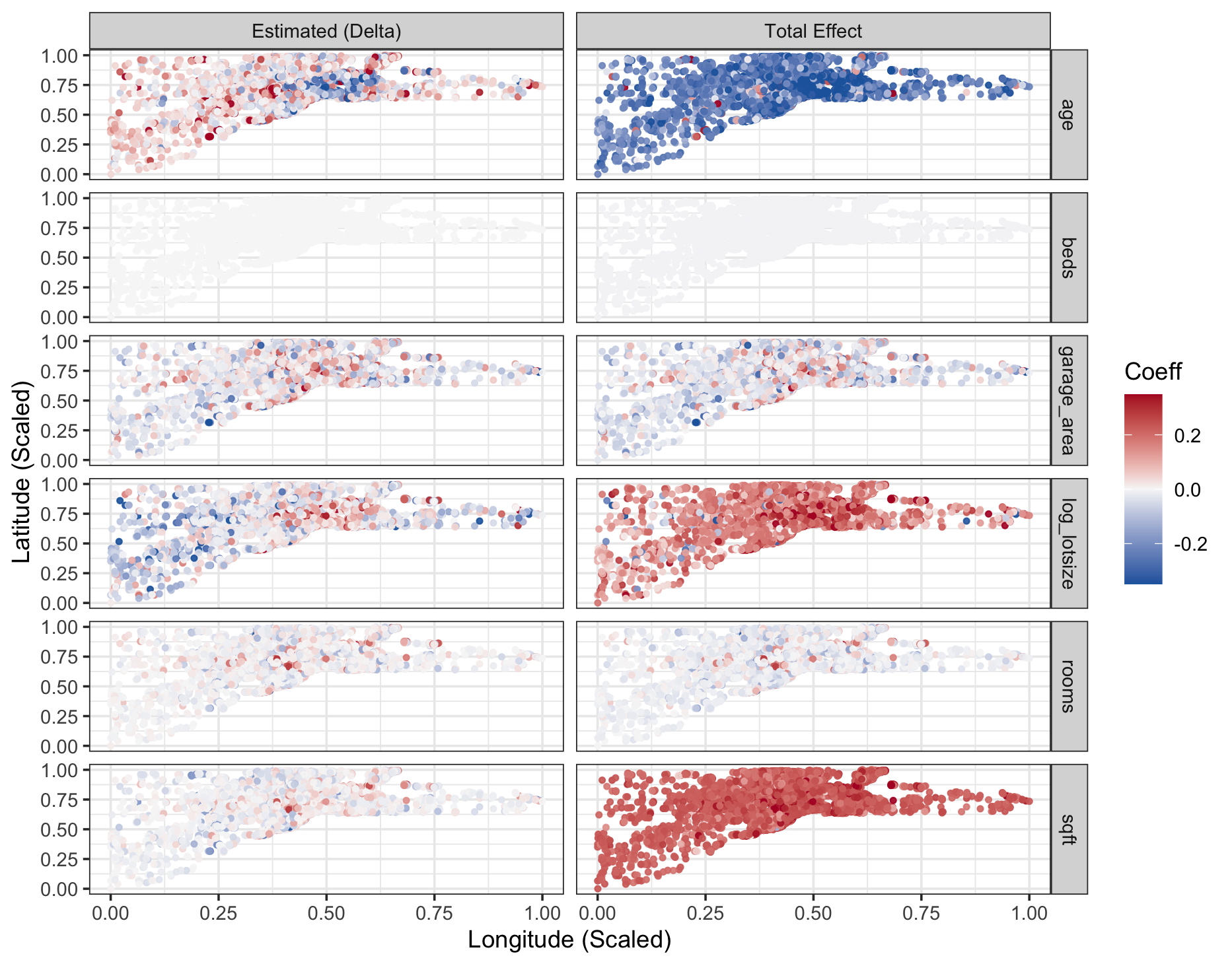}
    \caption{Estimated spatial deviations (left column) and total effects
    $\hat\beta_{G,j}+\hat\delta_j(u)$ (right column) at $\tau=0.5$ for the
    SS--SVCQR model. Each row corresponds to one covariate; colours indicate the
    local coefficient magnitude.}
    \label{fig:lucas-svcqr}
\end{figure}
Figure~\ref{fig:lucas-svcqr} visualizes the estimated deviation fields $\hat\delta_j(u)$ (left panels) and the corresponding total effects $\hat\beta_{G,j}+\hat\delta_j(u)$ (right panels). 

The coefficient on dwelling age is negative on average but displays marked spatial heterogeneity. In a broad central–northern band of the county, characterized by an older housing stock, the estimated marginal effect of one additional year of age is strongly negative, indicating substantial depreciation of older units relative to newer construction. Towards several peripheral neighbourhoods, the effect becomes less negative and in some pockets slightly positive, consistent with historic or high-amenity properties commanding a premium once size and other attributes are held fixed.

The effect of floor area behaves predominantly global. The deviation surface for log TLA is comparatively smooth and of small amplitude relative to its global component, implying that an incremental increase in floor area is capitalized into house prices in a fairly homogeneous manner over Lucas County. This finding is consistent with previous hedonic studies and meta-analyzes that report a stable and strongly positive association between square footage and sale price.

The coefficients on garage area exhibit moderate spatial variation around a relatively small global effect. In most locations the marginal value of additional garage space is close to zero, with only modest positive premia in some auto-oriented neighbourhoods and slightly negative effects in others. This suggests that, conditional on overall floor area and the presence of a garage, extra garage square footage plays a secondary role in this market compared with indoor living space and lot size. By contrast, the lot-size premium is clearly positive across almost the entire county, with stronger effects in several lower-density suburban areas and somewhat attenuated effects in denser neighbourhoods. This pattern is in line with urban economic arguments about the trade-off between land consumption and accessibility: larger lots are valued everywhere, but their marginal contribution to price is amplified where land is relatively abundant.

Floor area (log total living area) behaves as a predominantly global covariate with a strong positive effect that varies only mildly over space. The number of rooms shows intermediate behaviour: it is positively associated with price overall, but its marginal impact fluctuates across submarkets once floor area is controlled for. In contrast, the effect of the number of bedrooms is essentially spatially invariant, with a negligible deviation field, indicating that the valuation of an additional bedroom—conditional on total living area and rooms—can be well approximated by a single global coefficient.

\begin{table}[ht]
  \centering
  \caption{Out-of-sample quantile prediction and residual diagnostics for the Lucas County housing application.}
  \label{tab:lucas-multi-quant-moran}
  \small
  \begin{tabular}{lcccccccc}
        \toprule
        & \multicolumn{6}{c}{Out-of-sample performance} &
          \multicolumn{2}{c}{Diagnostics} \\
        \cmidrule(lr){2-7}\cmidrule(lr){8-9}
        Model
        & $\mathrm{CL}_{0.25}$ & $R^2_{0.25}$
        & $\mathrm{CL}_{0.5}$  & $R^2_{0.5}$
        & $\mathrm{CL}_{0.75}$ & $R^2_{0.75}$
        & Moran's $I$ & $p$-value \\
        \midrule
        SS--SVCQR & 0.1262 & 0.6132 & 0.1310 & 0.6455 & 0.0999 & 0.6471 & 0.126 & 0 \\
        Global QR & 0.1326 & 0.4882 & 0.1448 & 0.4973 & 0.1026 & 0.5348 & 0.448 & 0 \\
        QGAM      & 0.1382 & 0.4667 & 0.1482 & 0.4856 & 0.1050 & 0.5235 & 0.402 & 0 \\
        GWR       & 2.1234 & -7.193 & 2.0732 & -6.196 & 2.0231 & -8.175 & 0.028 & 0 \\
        \bottomrule
  \end{tabular}
\end{table}
Taken together, the results in Table~\ref{tab:lucas-multi-quant-moran} show that SS--SVCQR provides the best overall fit to the Lucas County data: it consistently improves out-of-sample quantile prediction relative to both global and semiparametric benchmarks, while also reducing residual spatial structure compared with purely global specifications. Combined with the coefficient patterns documented above, this evidence suggests that allowing key structural attributes to have sparse, smoothly varying effects over space is an effective and interpretable way to capture submarket heterogeneity in large housing datasets.

\section{Discussion}\label{sec:discussion}

In this paper, we have introduced a sparse--smooth spatially varying coefficient quantile regression (SS--SVCQR) framework. By integrating quantile loss with graph Laplacian regularization and group-sparse penalization, our method addresses the dual challenges of robust estimation under non-Gaussian tails and interpretable structural identification in spatial modeling.

The primary contribution of this work lies in the explicit decomposition of coefficients into global baselines and local deviations. Unlike standard Geographically Weighted Regression (GWR) \cite{Fotheringham2002}, which often yields coefficients that fluctuate everywhere due to local sampling variability, our approach employs a group $L_2$ penalty to perform automated global-local model selection. This \textit{parsimony-in-complexity} allows practitioners to distinguish between predictors that exert a universal influence across the domain and those driven by local contextual factors. Furthermore, the use of the normalized graph Laplacian $\bm L_{\mathrm{sym}}$ provides a robust mechanism to handle irregular sampling designs, preventing the overfitting often observed in kernel-based methods when data density is low.

Computationally, the convex formulation of SS--SVCQR enables the use of efficient ADMM and smoothed proximal-gradient algorithms. Our theoretical analysis confirms that this computational efficiency does not come at the cost of statistical rigor, as the estimator enjoys root-$n$ consistency and oracle selection properties under suitable conditions. From a practical perspective, these properties justify the use of SS--SVCQR as a drop-in replacement for standard QR in applications where spatial nonstationarity and heavy-tailed responses are of concern.

Despite these advantages, several limitations warrant further investigation. First, the performance of the spatial regularization relies on the construction of the adjacency graph. Although we utilized a mutual $k$-nearest-neighbor graph to adapt to density variations, the choice of $k$ and the bandwidth $\sigma$ remains a hyperparameter selection problem. Although cross-validation mitigates this, data-driven graph learning techniques \cite{Hallac2015} could potentially produce geometries that better reflect the underlying manifold of the spatial process. Second, our current framework focuses on a fixed quantile level $\tau$. In applications requiring a comprehensive view of the conditional distribution, estimating multiple quantiles simultaneously with non-crossing constraints \cite{HePanTanZhou2023} would be a valuable extension. Finally, while the graph Laplacian enforces smooth variations, it penalizes abrupt changes (e.g., discontinuities across administrative boundaries or physical barriers). Incorporating edge-preserving penalties, such as Total Variation in graphs \cite{Wang2016}, could allow the model to capture piecewise constant spatial regimes more effectively.

Future work will also explore spatiotemporal extensions. By constructing a product graph over space and time, the proposed regularization framework can be naturally generalized to capture dynamic heterogeneity. Additionally, developing post-selection inference tools to quantify uncertainty for the identified local deviations remains an important open direction.

\appendix
\section{Proofs of Theorems}\label{app:proofs}

Throughout this appendix, we keep the notation of Sections~\ref{sec:model}--\ref{sec:theory}.
For a vector $\bm v\in\mathbb R^n$ we write
\begin{equation*}
    \|\bm v\|_2^2 = \sum_{i=1}^n v_i^2, \quad
    \langle\bm a, \bm b\rangle = \bm a^\top\bm b  
\end{equation*}
for the Euclidean norm and inner product, and use $C,c,c_0,c_1,\dots$ for generic finite positive constants whose value may change from line to line. The notation $\preceq$ denotes the L\"owner order on symmetric matrices. Recall the diagonal multipliers $\bm X_{\odot j}=\diag(X_{1j},\dots,X_{nj})$, the (symmetric normalized) Laplacian $\bm L_{\mathrm{sym}}$ in~\eqref{eq:Lsym}, the degree matrix $\bm D$, and the degree--weighted centering projector $\mathsf{Proj}_D$ in~\eqref{eq:proj}.The subgradient of the check loss is $\psi_\tau(r)=\tau-\mathbf 1\{r<0\}$ and we stack it as $\bm\psi_\tau(\bm r)=(\psi_\tau(r_1),\dots,\psi_\tau(r_n))^\top$.

We record two standard facts that will be used repeatedly.

\paragraph{Knight identity and local quadratic expansion.}
For any $r,t\in\mathbb R$,
\begin{equation}\label{eq:knight}
    \rho_\tau(r - t) - \rho_\tau(r)
    = -t \psi_\tau(r)
    + \int_0^t \left\{\mathbf 1(r\le s)-\mathbf 1(r\le 0)\right\}\mathrm{d}s.
\end{equation}
This is the well–known identity of \cite{Knight1998}. For completeness, we briefly indicate the derivation. The function $u\mapsto \rho_\tau(u)$ is piecewise linear with a kink at $0$, and its derivative exists Lebesgue-a.e. and is given by $\psi_\tau(u)=\tau-\mathbf 1\{u<0\}$. Integrating the derivative of $s\mapsto \rho_\tau(r-s)$ between $0$ and $t$ then yields~\eqref{eq:knight}; see also \cite[Sec.~4.3]{Koenker2005}.

Let $\varepsilon_i=Y_i-Q_\tau(Y_i\mid \bm Z_i,\bm X_i,u_i)$ be as in Assumption~\ref{ass:A3}, and $f_{\varepsilon\mid Z,X,U}(\cdot)$ the corresponding conditional density. By \eqref{eq:knight}, conditioning on $(\bm Z_i,\bm X_i,u_i)$ and taking expectations gives
\begin{align*}
    \mathrm E\{\rho_\tau(\varepsilon_i-t)-\rho_\tau(\varepsilon_i)\mid \bm Z_i,\bm X_i,u_i\}
    & =
    - t \mathrm E\{\psi_\tau(\varepsilon_i)\mid \bm Z_i,\bm X_i,u_i\}\\
    & \quad
    + \int_0^t \Big(
    \Pr(\varepsilon_i\le s\mid \bm Z_i,\bm X_i,u_i)
    -\Pr(\varepsilon_i\le 0\mid \bm Z_i,\bm X_i,u_i)
    \Big) \mathrm{d}s.
\end{align*}
By the defining property of conditional quantiles, $\Pr(\varepsilon_i\le 0\mid \bm Z_i,\bm X_i,u_i)=\tau$ and $\mathrm E\{\psi_\tau(\varepsilon_i)\mid \bm Z_i,\bm X_i,u_i\}=0$. Moreover, by the continuity of the conditional density at $0$ (Assumption~\ref{ass:A3})
\begin{equation*}
    \Pr(\varepsilon_i\le s\mid \bm Z_i,\bm X_i, u_i)
    - \Pr(\varepsilon_i\le 0\mid \bm Z_i, \bm X_i, u_i)
    = f_{\varepsilon\mid Z, X, U}(0)s + o(s)
\end{equation*}
uniformly for $|s|\le \eta$ and some $\eta>0$.
Hence, for $|t|\le \eta$,
\begin{align}\label{eq:lqa}
    \mathrm E\left\{\rho_\tau(\varepsilon_i-t) - \rho_\tau(\varepsilon_i)\mid \bm Z_i, \bm X_i, u_i\right\}
    &= \int_0^t \left\{f_{\varepsilon\mid Z, X, U}(0)s + o(s)\right\} \mathrm{d}s
    \nonumber \\
    &= \frac12 f_{\varepsilon\mid Z,X,U}(0) t^2 + o(t^2),
\end{align}
uniformly in $(\bm Z_i,\bm X_i,u_i)$.
Summing over $i$ leads to a local quadratic approximation of the empirical loss around the true residuals; cf.\ Assumption~\ref{ass:A7} and \cite[Sec.~4.3]{Koenker2005}.

\paragraph{Graph Poincar\'e inequality on the centered subspace.}
Let $\mathcal S\subset\mathbb R^n$ be the subspace
\begin{equation*}
    \mathcal S
    =\{\bm v\in\mathbb R^n:\ \bm 1_{\mathcal C}^\top \bm D\bm v=0\ \text{for every connected component }\mathcal C\}.
\end{equation*}
By construction, the null space of $\bm L_{\mathrm{sym}}$ on a connected component $\mathcal C$ is spanned by $\bm D^{1/2}\bm 1_{\mathcal C}$, and therefore $\mathcal S$ is orthogonal, in the $\bm D$–inner product, to $\ker(\bm L_{\mathrm{sym}})$.
Since $\bm L_{\mathrm{sym}}$ is symmetric positive semidefinite with eigenvalues in $[0,2]$, there exists a constant $c_\lambda>0$ (depending only on the graph sequence and bounded away from $0$ under Assumption~\ref{ass:A1}) such that
\begin{equation}\label{eq:poincare}
    \bm v^\top \bm L_{\mathrm{sym}}\bm v  \ge c_\lambda \|\bm v\|_2^2,
    \quad \forall \bm v\in\mathcal S.
\end{equation}
This is a discrete Poincar\'e inequality; see, for instance, \cite[Chap.~1]{Chung1997} and \cite[Sec.~2]{vonLuxburg2007}.

Let $(\hat{\bm\theta}_P,\hat{\bm\Theta}_L)$ be any global minimizer of the penalized criterion $\mathcal L_n$ in Section~\ref{sec:theory}. Write the true parameter as $(\bm\theta_{P0},\bm\Theta_{L0})$ with components $\bm\delta_{j0}$, and define
\begin{equation*}
    q_{\tau, i}(\bm\theta_P, \bm\Theta_L)
    = \bm Z_i^\top\bm\alpha + \bm X_i^\top\bm\beta_G
    + \sum_{j=1}^p X_{ij}\delta_j(u_i),
    \quad
    r_i(\bm\theta_P, \bm\Theta_L) = Y_i - q_{\tau, i}(\bm\theta_P, \bm\Theta_L).
\end{equation*}
By optimality,
\begin{equation*}
    \mathcal L_n(\hat{\bm\theta}_P, \hat{\bm\Theta}_L)
    \le \mathcal L_n(\bm\theta_{P0}, \bm\Theta_{L0}),
\end{equation*}
so that for any feasible $(\bm\theta_P, \bm\Theta_L)$
\begin{align}\label{eq:basic-ineq}
    0
    &\le \mathcal L_n(\bm\theta_P, \bm\Theta_L)
       -\mathcal L_n(\hat{\bm\theta}_P, \hat{\bm\Theta}_L)\nonumber\\
    &= \sum_{i=1}^n
       \Big\{
          \rho_\tau\big(r_i(\bm\theta_P, \bm\Theta_L)\big)
          -\rho_\tau\big(r_i(\hat{\bm\theta}_P, \hat{\bm\Theta}_L)\big)
       \Big\} \nonumber \\
    & \quad
     + \sum_{j=1}^p \lambda_{1n} w_{j, n}
          \big(\|\bm\delta_j\|_2 - \|\hat{\bm\delta}_j\|_2\big)
     + \sum_{j=1}^p \lambda_{2n}
          \big(\bm\delta_j^\top \bm L_{\mathrm{sym}}\bm\delta_j
               - \hat{\bm\delta}_j^\top\bm L_{\mathrm{sym}}\hat{\bm\delta}_j\big).
\end{align}
In particular, choosing $(\bm\theta_P,\bm\Theta_L)=(\bm\theta_{P0},\bm\Theta_{L0})$ yields the inequality we will work with.

In what follows we use the shorthand
\begin{equation*}
    \Delta\bm\alpha = \hat{\bm\alpha}-\bm\alpha_0,\quad
    \Delta\bm\beta_G = \hat{\bm\beta}_G-\bm\beta_{G0},\quad
    \Delta\bm\delta_j = \hat{\bm\delta}_j-\bm\delta_{j0},
\end{equation*}
and denote by $\bm h=\sum_{j=1}^p\bm X_{\odot j}\Delta\bm\delta_j\in\mathbb R^n$ the aggregate contribution of the deviation perturbations. Let $\bm r_0 = (\varepsilon_1,\dots,\varepsilon_n)^\top$ be the true residual vector, and set
\begin{equation*}
    t_i = q_{\tau,i}(\hat{\bm\theta}_P,\hat{\bm\Theta}_L)-q_{\tau,i}(\bm\theta_{P0},\bm\Theta_{L0})
    = \bm Z_i^\top\Delta\bm\alpha
    + \bm X_i^\top\Delta\bm\beta_G
    + \sum_{j=1}^p X_{ij}\Delta\delta_j(u_i),
\end{equation*}
so that $r_i(\hat{\bm\theta}_P, \hat{\bm\Theta}_L)=\varepsilon_i-t_i$ and $r_i(\bm\theta_{P0}, \bm\Theta_{L0})=\varepsilon_i$.

Applying Knight identity~\eqref{eq:knight} with $r=\varepsilon_i$ and $t=t_i$ and summing over $i$ shows that the difference of empirical losses appearing in~\eqref{eq:basic-ineq} can be written as
\begin{equation}\label{eq:knight-sum}
    \sum_{i=1}^n \Big\{\rho_\tau(\varepsilon_i-t_i)-\rho_\tau(\varepsilon_i)\Big\}
    = -\langle\bm\psi_\tau(\bm\varepsilon), \bm t\rangle
      + \sum_{i=1}^n \int_0^{t_i}
          \big\{\mathbf 1(\varepsilon_i\le s)-\mathbf 1(\varepsilon_i\le 0)\big\} \mathrm{d}s,
\end{equation}
where $\bm t=(t_1,\dots,t_n)^\top$. The second term on the right-hand side admits a quadratic expansion via~\eqref{eq:lqa} and Assumption~\ref{ass:A7}.

Recall the active set $\mathcal A_0=\{j:\bm\delta_{j0} = \bm 0\}$ and its complement $\mathcal A_0^c=\{1, \dots, p\}\setminus\mathcal A_0$.

\begin{proof}[Proof of Theorem~\ref{thm:delta-selection}(i)]
Fix a covariate index $j\in\mathcal A_0^c$. We control the deviation $\Delta\bm\delta_j$ by isolating its contribution in the basic inequality~\eqref{eq:basic-ineq} with $(\bm\theta_P, \bm\Theta_L)=(\bm\theta_{P0}, \bm\Theta_{L0})$. Using~\eqref{eq:knight-sum} and the local quadratic approximation~\eqref{eq:lqa}, we obtain
\begin{align} \label{eq:lqa-sum2}
    &\sum_{i=1}^n
       \Big\{
          \rho_\tau(\varepsilon_i-t_i)-\rho_\tau(\varepsilon_i)
       \Big\}
    \nonumber\\
    &\quad
    = -\langle\bm\psi_\tau(\bm\varepsilon),\bm t\rangle
     + \frac12\sum_{i=1}^n f_{\varepsilon\mid Z, X, U}(0) t_i^2
     + R_n(\bm t),
\end{align}
where $R_n(\bm t)=o_p(\|\bm t\|_2^2)$ uniformly in local neighborhoods
$\{\|\bm t\|_2\le c\sqrt n\}$ by Assumption~\ref{ass:A7}.
Since $p$ is fixed and the penalty sequences satisfy Assumption~\ref{ass:A5},
standard arguments for penalized quantile regression
(cf.\ \cite[Sec.~4.3]{Koenker2005}) imply
$\|\Delta\bm\alpha\|_2+\|\Delta\bm\beta_G\|_2=O_p(n^{-1/2})$.
Therefore the contribution of the parametric block to
$\bm t$ is $O_p(1)$ in $\ell_2$, and we can write
\[
    \bm t = \bm h + \bm u, 
    \quad
    \|\bm u\|_2 = O_p(1),
\]
with $\bm h=\sum_{j=1}^p\bm X_{\odot j}\Delta\bm\delta_j$ as above. Expanding $t_i^2$ and using bounded covariates (Assumption~\ref{ass:A2}) shows that the cross-term $\sum f(0) h_i u_i$ and the purely parametric term $\sum f(0)u_i^2$ in~\eqref{eq:lqa-sum2} are of smaller order $O_p(\|\bm h\|_2)$ and $O_p(1)$, respectively, compared to the leading quadratic term in $\bm h$.
Absorbing these contributions into the remainder, we obtain
\begin{equation}\label{eq:one-group2}
    \sum_{i=1}^n
       \Big\{
          \rho_\tau(\varepsilon_i-t_i)-\rho_\tau(\varepsilon_i)
       \Big\}
    =
    -\langle\bm\psi_\tau(\bm\varepsilon), \bm h\rangle
    + \frac12\bm h^\top\bm W\bm h
    + o_p(\|\bm h\|_2^2),
\end{equation}
where $\bm W=\diag\{f_{\varepsilon\mid Z, X, U}(0), \dots\}$ has diagonal entries uniformly bounded between $c_f$ and $C_f$ by Assumption~\ref{ass:A3}.

Plugging this expression and $(\bm\theta_P, \bm\Theta_L)=(\bm\theta_{P0}, \bm\Theta_{L0})$ into~\eqref{eq:basic-ineq}, and then isolating the $j$th block, we have
\begin{align} \label{eq:group-split}
    0
    &\le
    -\langle\bm\psi_\tau(\bm\varepsilon), \bm h\rangle
    + \frac12\bm h^\top\bm W\bm h
    + o_p(\|\bm h\|_2^2) \nonumber\\
    & \quad
    + \lambda_{1n}w_{j, n}
       \big(\|\bm\delta_{j0}\|_2-\|\hat{\bm\delta}_j\|_2\big)
    + \lambda_{2n}
       \big(\bm\delta_{j0}^\top\bm L_{\mathrm{sym}}\bm\delta_{j0}
            -\hat{\bm\delta}_j^\top\bm L_{\mathrm{sym}}\hat{\bm\delta}_j\big)
    \nonumber \\
    & \quad
    + \sum_{\ell\neq j}\lambda_{1n}w_{\ell,n}
       \big(\|\bm\delta_{\ell0}\|_2-\|\hat{\bm\delta}_\ell\|_2\big)
    + \sum_{\ell\neq j}\lambda_{2n}
       \big(\bm\delta_{\ell0}^\top\bm L_{\mathrm{sym}}\bm\delta_{\ell0}
            -\hat{\bm\delta}_\ell^\top\bm L_{\mathrm{sym}}\hat{\bm\delta}_\ell\big).
\end{align}
Using convexity of the group norm and positive semidefiniteness of the Laplacian, the last line is nonpositive and can be dropped to obtain an inequality involving only the $j$th block. Writing $\bm h=\bm X_{\odot j}\Delta\bm\delta_j + \bm r_j$ with $\bm r_j=\sum_{\ell\neq j}\bm X_{\odot \ell}\Delta\bm\delta_\ell$ and treating $\bm r_j$ as fixed, the quadratic term in~\eqref{eq:group-split} yields
\begin{align}
    \frac12\bm h^\top\bm W\bm h
    &= \frac12\|\bm X_{\odot j}\Delta\bm\delta_j\|_{W}^2
     + \langle\bm X_{\odot j}\Delta\bm\delta_j,\bm W\bm r_j\rangle
     + \frac12\bm r_j^\top\bm W\bm r_j \nonumber \\
    &\ge \frac12 c_f\|\bm X_{\odot j}\Delta\bm\delta_j\|_2^2
         - C_f\|\bm X_{\odot j}\Delta\bm\delta_j\|_2\|\bm r_j\|_2
         + \frac12 c_f\|\bm r_j\|_2^2.
\end{align}
The negative cross-term can again be absorbed into the $o_p(\|\bm h\|_2^2)$ remainder, since $\|\bm r_j\|_2=O_p(\sqrt n)$ and $\|\bm X_{\odot j}\Delta\bm\delta_j\|_2$ will be shown to be $o_p(\sqrt n)$.
Moreover, using polarization,
\[
    \bm\delta_{j0}^\top\bm L_{\mathrm{sym}}\bm\delta_{j0}
    -\hat{\bm\delta}_j^\top\bm L_{\mathrm{sym}}\hat{\bm\delta}_j
    = -\Delta\bm\delta_j^\top\bm L_{\mathrm{sym}}\Delta\bm\delta_j
    -2 \Delta\bm\delta_j^\top\bm L_{\mathrm{sym}}\bm\delta_{j0}.
\]
By the centering constraint, both $\bm\delta_{j0}$ and $\hat{\bm\delta}_j$ lie in the subspace $\mathcal S$ and hence so does $\Delta\bm\delta_j$. The Poincar\'e inequality~\eqref{eq:poincare} gives $\Delta\bm\delta_j^\top\bm L_{\mathrm{sym}}\Delta\bm\delta_j\ge c_\lambda\|\Delta\bm\delta_j\|_2^2$. Finally, using Cauchy--Schwarz and Assumption~\ref{ass:A4},
\[
    \left|\Delta\bm\delta_j^\top\bm L_{\mathrm{sym}}\bm\delta_{j0}\right|
    \le \big(\Delta\bm\delta_j^\top\bm L_{\mathrm{sym}}\Delta\bm\delta_j\big)^{1/2}
         \big(\bm\delta_{j0}^\top\bm L_{\mathrm{sym}}\bm\delta_{j0}\big)^{1/2}
    \le C \|\Delta\bm\delta_j\|_2 \mathcal S_j^{1/2},
\]
where we abbreviate $\mathcal S_j=\bm\delta_{j0}^\top\bm L_{\mathrm{sym}}\bm\delta_{j0}$.
Collecting these facts, the inequality~\eqref{eq:group-split} implies
\begin{equation}\label{eq:core-ineq2}
    \begin{aligned}
        \frac12 c_f\|\bm X_{\odot j}\Delta\bm\delta_j\|_2^2
        + \lambda_{2n} c_\lambda\|\Delta\bm\delta_j\|_2^2
        \le
        \langle\bm X_{\odot j}^\top\bm\psi_\tau(\bm\varepsilon),\Delta\bm\delta_j\rangle
        + C\lambda_{2n}\mathcal S_j^{1/2}\|\Delta\bm\delta_j\|_2 \\
        + o_p\big(\|\bm X_{\odot j}\Delta\bm\delta_j\|_2^2\big).
    \end{aligned}
\end{equation}

We now bound the right-hand side.
By Assumptions~\ref{ass:A2}--\ref{ass:A3} and a standard moment bound,
\[
    \|\bm X_{\odot j}^\top\bm\psi_\tau(\bm\varepsilon)\|_2
    \le C\sqrt n
\]
with probability approaching one, so that
\[
    \big|\langle\bm X_{\odot j}^\top\bm\psi_\tau(\bm\varepsilon), \Delta\bm\delta_j\rangle\big|
    \le C\sqrt n  \|\Delta\bm\delta_j\|_2.
\]
The remainder term in~\eqref{eq:core-ineq2} is of smaller order than the leading quadratic term in $\bm X_{\odot j}\Delta\bm\delta_j$, and can therefore be absorbed into the left-hand side.
Dropping the nonnegative term $(1/2)c_f\|\bm X_{\odot j}\Delta\bm\delta_j\|_2^2$ and dividing by $\|\Delta\bm\delta_j\|_2$ (for $\Delta\bm\delta_j\neq 0$) yields
\[
    \lambda_{2n} c_\lambda\|\Delta\bm\delta_j\|_2
    \le
    C\sqrt n + C\lambda_{2n}\mathcal S_j^{1/2}.
\]
Solving the quadratic inequality $a u^2\le b u + c$ with
$a=\lambda_{2n}c_\lambda$, $b=C\sqrt n$ and $c=C\lambda_{2n}\mathcal S_j^{1/2}$, we obtain
\[
    \|\Delta\bm\delta_j\|_2^2
    \le
    C\left(
       \frac{n}{\lambda_{2n}^2}
       + \mathcal S_j
    \right).
\]
Finally, using bounded covariates (Assumption~\ref{ass:A2}),
$\|\bm X_{\odot j}\Delta\bm\delta_j\|_2^2$ is comparable to $\|\Delta\bm\delta_j\|_2^2$ up to a constant factor, and hence
\[
    \frac1n\sum_{i=1}^n\big\{\hat\delta_j(u_i)-\delta_{0,j}(u_i)\big\}^2
    = \frac{\|\Delta\bm\delta_j\|_2^2}{n}
    = O_p\Big(\frac{1}{n\lambda_{2n}}+\lambda_{2n}\mathcal S_j\Big),
\]
which is the desired mean squared deviation bound.
The two summands correspond respectively to the stochastic fluctuation (variance term) and the smoothing bias induced by the Laplacian penalty.
Balancing them gives $\lambda_{2n}\asymp (n\mathcal S_j)^{-1/2}$.
\end{proof}

\begin{proof}[Proof of Theorem~\ref{thm:delta-selection}(ii)]
We now show that the estimator identifies the active set $\mathcal A_0$ with probability tending to one. The argument follows the usual primal--dual witness strategy for group lasso estimators, adapted to the presence of the additional smoothness penalty; see, for example, \cite{YuanLin2006,WangLeng2008,Simon2013SGL}.

\medskip\noindent\emph{Inactive groups ($j\in\mathcal A_0$).} For a fixed $j\in\mathcal A_0$ we consider the KKT conditions corresponding to $\hat{\bm\delta}_j$.
There exists a Lagrange multiplier vector $\bm\mu_j$ (for the centering constraint) such that
\begin{equation}\label{eq:kkt-inactive}
    \bm X_{\odot j}^\top \bm\psi_\tau(\hat{\bm r})
    + 2 \lambda_{2n} \bm L_{\mathrm{sym}}\hat{\bm\delta}_j
    + \bm D\bm\mu_{j}
    \in -\lambda_{1n} w_{j, n} \partial\|\hat{\bm\delta}_j\|_2,
\end{equation}
where $\hat{\bm r}=(\hat r_1,\dots,\hat r_n)^\top$ are the fitted residuals.
Projecting both sides of~\eqref{eq:kkt-inactive} onto the centered subspace $\mathcal S$ eliminates the multiplier term $\bm D\bm\mu_j$ and does not increase the norm.
If we test $\hat{\bm\delta}_j=\bm 0$, the subdifferential on the right-hand side becomes the closed Euclidean ball of radius $\lambda_{1n}w_{j,n}$.
Thus a sufficient condition for $\hat{\bm\delta}_j=\bm 0$ is
\[
    \big\|\mathsf{Proj}_D\big(\bm X_{\odot j}^\top \bm\psi_\tau(\hat{\bm r})\big)\big\|_2
    \le \lambda_{1n} w_{j, n}.
\]
Bounded covariates and Assumptions~\ref{ass:A2}--\ref{ass:A3} imply
$\|\bm X_{\odot j}^\top \bm\psi_\tau(\hat{\bm r})\|_2=O_p(\sqrt n)$ uniformly in $j$.
Assumption~\ref{ass:A5} requires $\sqrt n\lambda_{1n}\to\infty$ and $\inf_{j\in\mathcal A_0}w_{j,n}\ge c_w>0$ with probability tending to one, hence
\[
    \Pr\Big(
    \|\mathsf{Proj}_D(\bm X_{\odot j}^\top \bm\psi_\tau(\hat{\bm r}))\|_2
    \le \lambda_{1n}w_{j,n}
    \ \text{for all } j\in\mathcal A_0
    \Big)\rightarrow 1,
\]
which implies $\Pr(\hat{\bm\delta}_j=\bm 0\ \text{for all } j\in\mathcal A_0)\to 1$.

\medskip\noindent\emph{Active groups ($j\in\mathcal A_0^c$).}
For $j\in\mathcal A_0^c$ consider the \emph{restricted} problem in which we constrain $\bm\delta_j\equiv \bm 0$ for all $j\in\mathcal A_0$ and optimize over $(\bm\theta_P,\{\bm\delta_j:j\in\mathcal A_0^c\})$. Let $(\tilde{\bm\theta}_P, \tilde{\bm\Theta}_L)$ denote its minimizer, and write $\tilde{\bm\delta}_j$ for the deviation blocks in the restricted solution. By essentially the same argument as in part~(i), one shows that $\|\tilde{\bm\delta}_j - \bm\delta_{j0}\|_2=O_p\big((n\lambda_{2n})^{-1/2}+\lambda_{2n}\mathcal S_j^{1/2}\big)$ for $j\in\mathcal A_0^c$, and therefore $\|\tilde{\bm\delta}_j\|_2\ge \|\bm\delta_{j0}\|_2+o_p(1)$. Assumption~\ref{ass:A6} (groupwise beta--min condition) requires $\|\bm\delta_{j0}\|_2 \ge c_\beta\lambda_{1n}w_{j, n}$ for some $c_\beta>0$ and all $j\in\mathcal A_0^c$.
Combining these facts,
\[
    \Pr\big(\|\tilde{\bm\delta}_j\|_2 > \tfrac12 c_\beta\lambda_{1n}w_{j, n} \text{ for all } j\in\mathcal A_0^c\big)\to 1,
\]
so with high probability no active group is shrunk exactly to zero in the restricted solution.

It remains to show that the restricted solution is also optimal for the \emph{unrestricted} problem. For $j\in\mathcal A_0$, the KKT conditions of the restricted problem enforce
\[
    \big\|\mathsf{Proj}_D\big(\bm X_{\odot j}^\top \bm\psi_\tau(\tilde{\bm r})\big)\big\|_2
    < \lambda_{1n} w_{j, n}
\]
with probability tending to one, where $\tilde{\bm r}$ are the residuals at the restricted solution. Hence one can choose subgradients $\bm\xi_j\in\partial\|\bm 0\|_2$ with $\|\bm\xi_j\|_2<1$ such that the full KKT system for the unrestricted problem is satisfied at $(\tilde{\bm\theta}_P,\tilde{\bm\Theta}_L)$. This is the usual dual feasibility argument in the primal--dual witness construction; see \cite{YuanLin2006,Simon2013SGL}. Therefore the restricted solution coincides with the unrestricted minimizer, and in particular $\hat{\bm\delta}_j=\tilde{\bm\delta}_j\ne \bm 0$ for all $j\in\mathcal A_0^c$ with probability tending to one.

Combining the inactive and active cases, we conclude that
\[
\Pr\big(\hat{\bm\delta}_j=\bm 0\ \forall j\in\mathcal A_0,\
        \hat{\bm\delta}_j\ne\bm 0\ \forall j\in\mathcal A_0^c\big)\to 1,
\]
and on this event the rate in part~(i) holds for all $j\in\mathcal A_0^c$, since the preceding argument for the deviation rate did not rely on the behavior of inactive groups.
\end{proof}

\begin{proof}[Proof of Lemma~\ref{lem:crossterm}]
Recall the notation $e_{j, i}=\hat\delta_j(u_i) - \delta_{0, j}(u_i)$ and $\bm e_j = (e_{j, 1}, \dots, e_{j,n})^\top$.
By Theorem~\ref{thm:delta-selection}(i),
\[
    \frac{1}{n}\sum_{i=1}^n e_{j,i}^2
    = O_p\left(\frac{1}{n\lambda_{2n}}+\lambda_{2n}\mathcal S_j\right)
\]
for each $j\in\mathcal A_0^c$.
Let $\bm G_i=(\bm Z_i^\top,\bm X_i^\top)^\top$ and write
\[
    \bm T_n
    =
    \frac{1}{\sqrt n}\sum_{i=1}^n
    f_{\varepsilon\mid Z, X, U}(0)
    \left(\sum_{j=1}^p X_{ij} e_{j, i}\right)\bm G_i.
\]
Using Cauchy--Schwarz,
Assumptions~\ref{ass:A2}--\ref{ass:A3}, and bounded $p$, we get
\begin{align*}
    \|\bm T_n\|_2
    &\le C
    \left(\frac{1}{n}\sum_{i=1}^n
              \Big(\sum_{j=1}^p X_{ij} e_{j, i}\Big)^2
    \right)^{1/2}
    \left(\frac{1}{n}\sum_{i=1}^n \|\bm G_i\|_2^2\right)^{1/2}\\
    & \le C'
    \left(
    \frac{1}{n}\sum_{i=1}^n\sum_{j=1}^p e_{j,i}^2
    \right)^{1/2}
    = C'
    \left(
    \frac{1}{p}\sum_{j=1}^p \frac{1}{n}\sum_{i=1}^n e_{j,i}^2
    \right)^{1/2},
\end{align*}
where we used the inequality $(\sum_j a_j)^2\le p\sum_j a_j^2$.
By the deviation bound and Assumption~\ref{ass:A4} we have
\[
    \frac{1}{p}\sum_{j=1}^p \frac{1}{n}\sum_{i=1}^n e_{j, i}^2
    = O_p\left(\frac{1}{n\lambda_{2n}} + \lambda_{2n}\bar{\mathcal S}\right),
    \quad
    \bar{\mathcal S} = \frac{1}{p}\sum_{j=1}^p \mathcal S_j.
\]
If $\lambda_{2n}\asymp n^{-1/2}$, as assumed in the lemma, then
\[
    \frac{1}{n\lambda_{2n}} + \lambda_{2n}\bar{\mathcal S}
    = O\big(n^{-1/2}\big),
\]
so that
$\|\bm T_n\|_2=O_p(n^{-1/4})=o_p(1)$.
This is exactly the stated conclusion.
\end{proof}

\begin{proof}[Proof of Theorem~\ref{thm:normal-oracle}]
We first derive the asymptotic distribution of the parametric block $\hat{\bm\theta}_P=(\hat{\bm\alpha}^\top,\hat{\bm\beta}_G^\top)^\top$ and then argue the oracle property.

Let $\hat r_i$ be the fitted residuals and recall $\varepsilon_i=r_i(\bm\theta_{P0},\bm\Theta_{L0})$.
The KKT conditions for the parametric block read
\[
\sum_{i=1}^n \psi_\tau(\hat r_i)\bm G_i = \bm 0.
\]
Write $\Delta\bm\theta_P=\hat{\bm\theta}_P-\bm\theta_{P0}$.
Using the representation
\[
    \hat r_i
    = \varepsilon_i
    - \bm G_i^\top\Delta\bm\theta_P
    - \sum_{j=1}^p X_{ij}
    \left\{\hat\delta_j(u_i) - \delta_{0, j}(u_i)\right\},
\]
and applying the same linearization as in~\eqref{eq:lqa} (see also \cite[Sec.~4.3]{Koenker2005}), we obtain
\begin{equation}\label{eq:lin2}
    \psi_\tau(\hat r_i) - \psi_\tau(\varepsilon_i)
    = - f_{\varepsilon\mid Z,X,U}(0)
        \Big(
          \bm G_i^\top\Delta\bm\theta_P
          + \sum_{j=1}^p X_{ij} e_{j,i}
        \Big)
      + \eta_i,
\end{equation}
where the remainder $\eta_i$ satisfies
$\sum_{i=1}^n\eta_i\bm G_i = o_p(\sqrt n\|\Delta\bm\theta_P\|_2)
+ o_p(1)$ uniformly on events where $\|\Delta\bm\theta_P\|_2=O(n^{-1/2})$.
Summing~\eqref{eq:lin2} over $i$ and rearranging terms yields
\begin{align*}
    \bm 0
    &=
    \sum_{i=1}^n \psi_\tau(\hat r_i)\bm G_i\\
    &=
    \sum_{i=1}^n \psi_\tau(\varepsilon_i)\bm G_i
    -\sum_{i=1}^n f_{\varepsilon\mid Z,X,U}(0)\bm G_i\bm G_i^\top\Delta\bm\theta_P
    -\sum_{i=1}^n f_{\varepsilon\mid Z,X,U}(0)
       \Big(\sum_{j=1}^p X_{ij} e_{j,i}\Big)\bm G_i \\
    &\hspace{10cm} + \sum_{i=1}^n \eta_i\bm G_i.
\end{align*}
Dividing by $\sqrt n$ and denoting
\[
    \bm M_n = \frac1n\sum_{i=1}^n f_{\varepsilon\mid Z,X,U}(0)\bm G_i\bm G_i^\top,
    \quad
    \bm S_n = \frac1{\sqrt n}\sum_{i=1}^n \psi_\tau(\varepsilon_i)\bm G_i,
\]
we can rewrite this as
\[
    \sqrt n \bm M_n\Delta\bm\theta_P
    =
    \bm S_n
    -\frac{1}{\sqrt n}\sum_{i=1}^n f_{\varepsilon\mid Z,X,U}(0)
       \Big(\sum_{j=1}^p X_{ij} e_{j,i}\Big)\bm G_i
    + o_p(1).
\]
Assumption~\ref{ass:A7} states that $\bm M_n\to\bm M$ in probability and that $\bm M$ is positive definite.
By Lemma~\ref{lem:crossterm}, the second term on the right-hand side converges to zero in probability.
Thus
\[
    \sqrt n \Delta\bm\theta_P
    =
    \bm M^{-1}\bm S_n + o_p(1).
\]
By the multivariate central limit theorem and Assumption~\ref{ass:A7},
\[
    \bm S_n \ \overset{d}{\longrightarrow}\ \mathcal N(\bm 0,\bm V),
    \quad
    \bm V = \tau(1-\tau) \mathrm E(\bm G_i\bm G_i^\top),
\]
and Slutsky's theorem gives
\[
    \sqrt n(\hat{\bm\theta}_P - \bm\theta_{P0})
    \overset{d}{\longrightarrow}
    \mathcal N\big(\bm 0, \bm M^{-1}\bm V\bm M^{-1}\big),
\]
which establishes the first part of the theorem.

We now argue the oracle property.
Let $\hat{\mathcal A}$ denote the estimated set of active (truly varying) covariates, i.e.\ $\hat{\mathcal A}=\{j:\hat{\bm\delta}_j\ne\bm 0\}$.
By Theorem~\ref{thm:delta-selection}(ii),
\[
    \Pr(\hat{\mathcal A}=\mathcal A_0)\to 1.
\]
On the event $\{\hat{\mathcal A} = \mathcal A_0\}$ the minimization problem defining $(\hat{\bm\theta}_P, \hat{\bm\Theta}_L)$ reduces exactly to the oracle problem in which we \emph{a priori} set $\bm\delta_j\equiv\bm 0$ for all $j\in\mathcal A_0$ and optimize over $(\bm\theta_P,\{\bm\delta_j:j\in\mathcal A_0^c\})$. Therefore the joint limit distribution of $(\hat{\bm\theta}_P, \{\hat{\bm\delta}_j\}_{j\in\mathcal A_0^c})$ coincides with that of the corresponding oracle estimator. In particular, the parametric block enjoys the same asymptotic linear representation and covariance as if the true active set were known.
\end{proof}

Finally, we briefly comment on variance estimation for the limiting distribution.
A plug-in estimator of the asymptotic covariance matrix $\bm M^{-1}\bm V\bm M^{-1}$ is obtained by replacing $\bm M$ and $\bm V$ with
\[
    \hat{\bm M}
    = \frac1n\sum_{i=1}^n \hat f_i(0) \bm G_i\bm G_i^\top,
    \quad
    \hat{\bm V}
    = \tau(1 - \tau)\frac1n\sum_{i=1}^n \bm G_i\bm G_i^\top,
\]
where $\hat f_i(0)$ is either a kernel estimate of the density of $\varepsilon_i$ at zero based on residuals, or an implicit estimate based on a smoothed check loss with bandwidth $h_n\downarrow 0$ and $\sqrt n h_n\to\infty$. If spatial dependence among the observations is of concern, one can further use spatial block bootstrap or heteroskedasticity-and-autocorrelation consistent (HAC) corrections in the sandwich step, but this lies beyond the main focus of the present paper.

\bibliographystyle{elsarticle-num}
\bibliography{cas-refs}

\end{document}